\def \uell{u_{\ell}}
\def \uHat{\hat{U_{\ell}}}
\def \rHat{\hat{R}}
\newcommand{\argmin}{\mbox{${arg\,min}$}} 
\newcommand{\Prob}{\mbox{${\rm Pr}$}} 
\newcommand{\sN}{\mathcal{N}} 
\newcommand{\sS}{\mathcal{S}}
\newtheorem{problem}{Problem}
\newtheorem{claim}{Claim}
\newcommand{\calA}{\mathcal{A}}
\newcommand{\calB}{\mathcal{B}}  
\newcommand{\calC}{\mathcal{C}}  
\newcommand{\calD}{\mathcal{D}}
\newcommand{\calI}{\mathcal{I}} 
\newcommand{\calM}{\mathcal{M}} 
\newcommand{\calU}{\mathcal{U}} 
\newcommand{\calW}{\mathcal{W}}
\newcommand{\lambdab}{\lambda^{\mathcal{B}}} 
\newcommand{\lambdabprime}{\lambda^{\mathcal{B}'}} 
\newcommand{\lambdaa}{\lambda^{\mathcal{A}}} 
\newcommand{\cali}{\mathcal{I}}
\newcommand{\gtilde}{\tilde{g}}
\newcommand{\ftilde}{\tilde{f}}
\newcommand{\eat}[1]{} 
\newcommand*\Let[2]{\State #1 $\gets$ #2}
\algrenewcommand\algorithmicrequire{\textbf{Input:}}
\algrenewcommand\algorithmicensure{\textbf{Output:}}
\begin{document}
\title{Combating the Cold Start User Problem in Model Based Collaborative Filtering}

\author{Sampoorna Biswas}
\affiliation{%
  \institution{University of British Columbia}
}
\email{sambis@cs.ubc.ca}

\author{Laks V.S. Lakshmanan}
\affiliation{%
  \institution{University of British Columbia}
}
\email{laks@cs.ubc.ca}

\author{Senjuti Basu Ray}
\affiliation{%
  \institution{New Jersey Institute of Technology}}
\email{senjutib@njit.edu}

\renewcommand{\shortauthors}{S. Biswas et al.}

\begin{abstract}
For tackling the well known cold-start user problem in model-based recommender systems, one approach is to recommend a few items to a cold-start user and use the feedback to learn a profile. The learned profile can then be used to make good recommendations to the cold user. In the absence of a good initial profile, the recommendations are like random probes, but if not chosen judiciously, both bad recommendations and too many recommendations may turn off a user. We formalize the cold-start user problem by asking what are the $b$ best items we should recommend to a cold-start user, in order to learn her profile most accurately, where $b$, a given budget, is typically a small number. We formalize the problem as an optimization problem and present multiple non-trivial results, including NP-hardness as well as hardness of approximation. We furthermore show that the objective function, i.e., the least square error of the learned profile w.r.t. the true user profile, is neither submodular nor supermodular, suggesting efficient approximations are unlikely to exist. Finally, we discuss several scalable heuristic approaches for identifying 
the $b$ best items to recommend to the user and experimentally evaluate their performance on 4 real datasets. Our experiments show that our proposed accelerated algorithms significantly outperform the prior art in runnning time, while achieving similar error in the learned user profile as well as in the rating predictions.
\end{abstract}

%
%
\eat{
\begin{CCSXML}
<ccs2012>
<concept>
<concept_id>10002951.10003260.10003261.10003269</concept_id>
<concept_desc>Information systems~Collaborative filtering</concept_desc>
<concept_significance>500</concept_significance>
</concept>
<concept>
<concept_id>10010147.10010257.10010282.10011304</concept_id>
<concept_desc>Computing methodologies~Active learning settings</concept_desc>
<concept_significance>500</concept_significance>
</concept>
<concept>
<concept_id>10010147.10010257.10010282.10010283</concept_id>
<concept_desc>Computing methodologies~Batch learning</concept_desc>
<concept_significance>100</concept_significance>
</concept>
<concept>
<concept_id>10002950.10003624.10003625.10003630</concept_id>
<concept_desc>Mathematics of computing~Combinatorial optimization</concept_desc>
<concept_significance>300</concept_significance>
</concept>
</ccs2012>
\end{CCSXML}

\ccsdesc[500]{Information systems~Collaborative filtering}
\ccsdesc[500]{Computing methodologies~Active learning settings}
\ccsdesc[100]{Computing methodologies~Batch learning}
\ccsdesc[300]{Mathematics of computing~Combinatorial optimization}
}



\maketitle

\section{Introduction}


In order to generate good recommendations, one of the most popular methods in recommender systems is model-based collaborative filtering (CF)~\cite{ekstrand2011collaborative}, which assumes a generative model. An approach that has been particularly successful is the so-called matrix factorization (MF) approach, which assumes a latent factor model of low dimensionality for users and items, which are learned by factoring the matrix of observed ratings~\cite{koren2009matrix}. 
One reason for the success of latent factor models 
is that the latent factors can capture discriminating hidden features of items and users 
even when these features are not explicitly available as part of the data or are difficult to obtain. These extracted features are useful for making superior recommendations. As demonstrated by the Netflix prize competition, one of the most sophisticatd realizations of latent factor models is based on MF techniques~\cite{koren2009matrix}. In the rest of this paper, we consider recommender systems based on MF.

An important challenge faced by any recommender system is the so-called \emph{cold-start user} and \emph{cold-start item} problem. The former occurs when a new user joins the system and the latter when a new item becomes available or is added to the system's inventory. Since the system has very little information on such users and items, CF techniques perform poorly on cold-start users and items. In order to learn a profile or model of a cold-start user, we need to have the user's feedback on a certain minimum number of items, which involves recommending some items to that user. A key question is \emph{how to select items to recommend to a cold-start user}. Active learning strategies try to answer this question, but most approaches that have been explored in the literature have mainly tended to be ad hoc and heuristic in nature~\cite{Golbandi, Zhou-fMF, karimi2015supervised, rubens2015active, rashid2008learning}. While these works report empirical results based experiments conducted on some datasets, unfortunately, these works do not formulate the item selection problem in a rigorous manner and do not analyze its computational properties. Furthermore, no comprehensive scalability experiments have been reported on their proposed strategies for item selection. 

\eat{ 
Due to their wide  appeal in a large number of applications (e.g., from recommending books, music, movies, restaurants, and products, to recommending services, search queries, hashtags, and other online content), many different recommendation methods and techniques have been developed. Among these methods, one of the most popular one is collaborative filtering (CF) techniques~\cite{ekstrand2011collaborative}, that
are  built on  the  assumption that a  good way to  find interesting items is to find other people who have similar interests, and then recommend items that those similar users like. CF algorithms could be Memory-based, which operate over the entire user database to make recommendation every time, leading to slower response time as the size of the database grows. 
Our focus in this paper is Model-based CF, which, in contrast, exploits the user database once or a small number of times to estimate or learn a model, which is then used to make subsequent recommendations. Naturally, the latter approach is acknowledged to be a lot more scalable than its memory-based counterpart~\cite{breese1998empirical} as well as superior in its qualitative performance~\cite{koren2009matrix}.

In model-based CF, one popular way is to assume a {\em latent factor model}, that tries to explain the ratings of the items by characterizing both items and users on a relatively small number of {\em latent factors} inferred from the ratings patterns. Considering an application of movie recommendation, these latent factors might measure obvious features, such as, comedy versus drama, amount of violence, or orientation to children - thus providing an automated alternative to the features that may not be easily available in reality. As the Netflix prize has demonstrated, one of the most successful and sophisticated realizations of latent factor models are based on matrix factorization techniques~\cite{koren2009matrix}. They are known to be scalable as well as adaptive to the change in the underlying data. In its basic form, matrix factorization characterizes both items and users by vectors of latent factors inferred from item rating patterns. For the rest of the paper, we therefore only focus on matrix factorization techniques in the context of Model-based CF.

An important challenge that any recommender system faces, but Model-based CF even more so, is the {\em cold start user and/or item problem} \cite{schein2002methods} - the former occurs, when a new user joins the system and the latter occurs when a new item is added. Since the system has little to no information about such users or items, CF techniques perform poorly. We investigate the {\em cold start user problem} here~\footnote{\small The cold start item problem requires a separate investigation in its own merit}, with the objective of learning the cold start user's profile as accurately as possible, through a limited interaction between her and the recommender system. In particular, we aim to set up an ``interview'' plan that comprises of a set of questions (i.e., recommended items) dictated by a budget parameter $b$, for which we wish to obtain explicit feedback from the cold start user. This interview plan is designed {\em offline}, i.e, even before the system makes any interaction with the cold start user. Considering matrix factorization (MF) as the underlying model, we formalize the task as an optimization problem with the objective of reducing the {\em expected error} in our estimated user profile.

Existing works have studied the cold-start user problem in both offline and online settings. Like our problem settings, former selects a set of questions (i.e., the plan) before the interview begins mostly by designing heuristic solutions~\cite{ahn2008new, Golbandi, rashid2008learning}. The latter takes each user feedback into account to progressively update its knowledge about the user, before the next question is selected for her. Many proposed approaches \cite{kim2010collaborative, lam2008addressing, schein2002methods, lika2014facing, zhang2014addressing} also use item or user metadata to effectively tackle the cold start problem. Unfortunately, to the best of our knowledge, these related works lack of technical rigor in the problem formulation and solution design, as well as do not present a comprehensive empirical analysis on large scale datasets considering both quality and scalability experiments.
} 

{\bf Our Contributions:} In this paper, we focus on the cold-start user problem.
We assume a latent factor model based on matrix factorization for our underlying recommender system. Since user attention and patience is limited, we assume that there is a budget $b$ on the number of items for which we can request feedback from a cold-start user. The main question we then study is, \emph{how to select the $b$ best items to recommend to such a user that will allow the system to learn the user's profile as accurately as possible}. The motivation is that if the user profile is learned well, it will pay off in allowing the system to make high quality recommendations to the user in the future. We formulate the item selection problem as a discrete optimization problem, called \emph{optimal interview design} (OID), where the items selected can be regarded as questions selected for interviewing the cold-start user for her feedback on those items (Section~\ref{sec:prob-stmt}). 

Our first challenge is in formalizing the problem, i.e., defining the true user profile against which to measure the error of a learned profile. This is necessary for defining the objective function we need to optimize with our choice of $b$ items. The difficulty is that there is no prior information on a cold-start user. We address this by showing that under reasonable assumptions, which will be made precise in Section~\ref{sec:background}, we can directly express the difference between the learned user profile and the true user profile in terms of the latent factors of the $b$ items chosen. This allows us to reason about the quality of different choices of $b$ items and paves the way for our optimization framework (Section~\ref{sec:sfw}). 

Our second challenge is to analyze the problem theoretically. We establish that OID problem is NP-hard. The proof is fairly non-trivial and involves an intricate reduction from Exact Cover by 3-sets (X3C) (Section~\ref{sec:hardness}). We subsequently show that the optimal interview design problem is NP-hard to approximate to within a factor $\frac{\alpha}{\theta}$, where $\alpha$ and $\theta$ depend on the problem instance (Section~\ref{sec:appx}). Furthermore, we show that the objective function, i.e., least squared error between the true and learned user profile, is neither submodular nor supermodular, suggesting efficient approximation algorithms may be unlikely to exist (Section~\ref{sec:supermodularity}).

Our third challenge is computational. Since OID is both NP-hard, hard to approximate, and the objective function is neither submodular or supermodular, we present several heuristic scalable algorithms for selecting the $b$ best items to minimize the error (Section~\ref{sec:approach}). Our empirical results demonstrate that our algorithms significantly outperform previously studied state-of-the art heuristic solutions in scalability, while achieving similar quality in terms of error (Section~\ref{sec:expt}).

Related work is discussed in Section~\ref{sec:related-work}. The necessary background appears in Section~\ref{sec:background}, while Section~\ref{sec:conclusion} summarizes the paper and discusses future work.

\eat{ 
In contrast to these existing methods, our formalization is more sophisticated, as we adhere to probabilistic technique to model {\em user feedback} while designing the optimization problem in an expected manner. Additionally, we present a set of interesting non-trivial technical results. We present an NP-hardness proof of the problem that uses the well-known Maximum Cover by 3-Sets (X3C) \cite{garey1979} as the source problem, as  well as present proofs on hardness of approximation of our proposed problem. We furthermore show that the objective function, i.e., the least square error of the learned profile w.r.t. the true user profile, is neither submodular nor supermodular, suggesting efficient approximations are unlikely to exist. We present several scalable heuristic approaches for identifying the $b$ best items to recommend to the user. We design a comprehensive empirical analysis considering both quality and scalability aspect of our solutions using five real world large datasets. Our experimental results demonstrate that our algorithms significantly outperform the state-of-the-art in running time, while achieving 
similar qualitative performance.


\begin{itemize}
\item We propose an elegant formalism to the cold start user problem, with the objective of learning the cold start user's profile as accurately as possible, through a limited interaction between her and the recommender system by designing an interview plan that comprises of $b$ questions. 
\item We present a set of non-trivial technical results including the NP-hardness of our proposed problem and hardness of approximation. We furthermore prove that the objective function is neither submodular nor supermodular, suggesting efficient approximations are unlikely to exist. We present several scalable heuristic approaches for identifying the $b$ best items to recommend to the user.
\item We conduct large scale experiments to validate our proposed solutions both qualitatively and scalability-wise and compare against multiple state-of-the-art solutions.
\end{itemize}

The rest of the paper is organized as follows: in Section \ref{sec:related-work} the related work is reviewed. Problem settings and MF are explained in Section \ref{sec:background}. The proposed approach is introduced in Section \ref{sec:approach}, followed by the experimental results in Section \ref{sec:expt}. Finally, we conclude the paper in Section \ref{sec:conclusion}.
} 

\section{Related Work}\label{sec:related-work}
We classify research related to the problem studied in this paper under the following categories. 

\textbf{Cold Start Problem in CF}. The cold-start problem in CF-based recommender systems has been addressed using different approaches in prior work. A common approach combines CF with user content (metadata) and/or item content information to start off the recommendation process for cold users~\cite{
lam2008addressing,schein2002methods,lika2014facing,zhang2014addressing}. Other approaches leverage information from an underlying social network to recommend items to cold users \cite{massa2007trust,jamali2009trustwalker}. Some researchers have tried to solve it as an active learning problem \cite{rashid2008learning, rubens2015active}.
In addition, online CF  techniques, that incrementally update the latent vectors as new items or users arrive, have been proposed as a way to incorporate new data without retraining the entire model ~\cite{abernethy2007online, sarwar2002incremental, huang2016real}. None of these works rigorously study the problem of selecting a limited number of items for a cold-start user as an optimization problem. 

One exception is~\cite{anava2015budget}, which studies {\em the cold-start item problem} and formalizes it as an optimization problem of selecting users, to rate a given cold-start item. We borrow motivation from this paper and study the {\em cold-start user problem} by formalizing an optimization function in a probabilistic manner. Unlike them, our recommender model is based on probabilistic MF. Furthermore, they do not study the complexity or approximability of the user selection problem in their framework. They also do not run any scalability tests, and their experiments are quite limited. As part of our technical results, we show that our objective function is not supermodular. By duality between the technical problems of cold-start users and cold-start items, it follows that the objective used in their framework is not supermodular either, thus correcting a misclaim in their paper. A practical observation about the cold-start user problem is that it is easy and natural to motivate a cold-start user by asking her to rate several items in return for better quality recommendations using the learned profile. However, it is less natural and therefore harder to motivate users to help the system learn the profile of an item, so that it can be recommended to \emph{other} users in the future.

\textbf{Interactive Recommendation}. Items may be recommended to a cold-start user in batch mode or interactive mode. In batch mode, the items are selected in one shot and then used for obtaining feedback from the cold-start user. E.g., this is the approach adopted in~\cite{anava2015budget} (for user selection). In interactive mode, feedback obtained on an item can be incorporated in selecting the next item. Interactive recommendations are handled in two ways -- offline or online. We focus on the offline approach which considers all possible outcomes for feedback and prepares an ``interview plan'' in the form of a decision tree~\cite{Golbandi, Zhou-fMF, karimi2015supervised}. 
While heuristic solutions are proposed in~\cite{Golbandi, Zhou-fMF, karimi2015supervised}, large scale scalability experiments are not reported.  In contrast, multi-armed bandit frameworks that interleave exploration with exploitation have been studied \cite{
zhao2013interactive, caron2013mixing, bandits3, zeng2016online} in online setting. However, these  approaches 
require re-training of the model after each item is recommended. 

%

\eat{
{\bf Other Works}. Recommender systems are widely used in different domains. In the databases area, \cite{koutrika2009flexrecs, koutrika2008flexible} propose a flexible, customizable recommendation system framework on top of a relational database. In \cite{parameswaran2010recsplorer}, the authors study recommendation problems with temporal relations and propose a precedence mining model. They also conduct a user study to rate recommendation quality of various algorithms.
}

In sum, to the best of our knowledge, {\sl we are the first to formalize the item selection problem for interviewing a cold-start user as a discrete optimization problem, and analyze its complexity and approximability, besides proposing scalable solutions}.

\eat{ 
 The hardness of recommendation systems has also been studied under various assumptions - non-negative matrix factorization, used for collaborative filtering recommender systems \cite{vavasis2009complexity}, maximizing recommendations by selecting a seed set of users and recommending products to them \cite{goyal2012recmax}, forming groups of users such that recommendations made to them maximize satisfaction \cite{basu2015group}, and finding the $b$ best items to recommend that maximizes the probability of consumer purchase \cite{hammar2013using} have all been shown to be hard problems. But none of them address the cold-start problem.


} 

\section{Preliminaries \& Problem Statement}
\label{sec:background}
In this section, we summarize the relevant notions on collaborative filtering (CF) and [present further technical development. 

\subsection{Recommender Systems}
\label{sec:rsbg} 

Most recommender systems (RS) use a matrix $R^{m\times n}$ of ratings given by users to some items, with $r_{ij}$ denoting the rating of item $j$ by user $i$. We assume there are $m$ users and $n$ items, and an arbitrary, but fixed rating scale. The goal of CF based on latent factor models is to factor $R$ into a pair of matrices $U \in \mathbb{R}^{d\times m}$ and $V \in \mathbb{R}^{d \times n}$, consisting of low dimensional latent factor vectors of users and items respectively, such that their product approximates $R$ as closely as possible. The learned factor matrices are used to predict unknown ratings: the predicted rating of item $j$ by user $i$, is $\hat{r}_{ij} = U_i^T V_j$. 
Items with high predicted ratings are recommended to users. We denote the matrix of predicted ratings by $\rHat$. Matrix factorization (MF), a popular approach to CF, tries to find factor matrices such that the RMSE between predicted and observed ratings is minimized: i.e., $\argmin_{U,V} ||R - U^TV||^2_F$, where $||A||_F := \sqrt{\sum_{i=1}^m\sum_{j=1}^n a_{ij}^2}$ denotes the Frobenius norm of matrix \cite{koren2009matrix}. 
\eat{ 
Most recommender systems use a rating/interaction matrix, $R$, of size $ n \times m$ with $n$ users and $m$ items. Each entry $r_{ij}$ in the matrix represents user $i$'s preference rating on item $i$. We assume the rating scale is given to us and is orthogonal to our problem. 
} 

\subsection{Matrix Factorization}
\label{sec:pmf}

For our underlying recommender system, we look at the probabilistic interpretation of matrix factorization (MF) models which assumes that user and item features are drawn from distributions. More precisely, 
it expresses the rating matrix $R$ as a product of two random low dimension latent factor matrices with the following zero-mean Gaussian priors \cite{pmf}: 

\begin{align}
\vspace*{-1ex} 
\Prob[U|\Sigma_U] =  \prod^m_{i=1} \sN(U_i|\mathbf{0}, \sigma_{u_i}^2I),  
%
\Prob[V|\Sigma_V] =  \prod^n_{j=1} \sN(V_j|\mathbf{0}, \sigma_{v_j}^2I), 
\end{align}

\vspace*{-1.5ex} 
\noindent 
where $\sN(x|\mu, \sigma^2)$ is the probability density function of a Gaussian distribution with mean $\mu$ and variance $\sigma^2$. 
It then estimates the observed ratings as $R = \hat{R} + \varepsilon = 
U^T V + \varepsilon$, where $\varepsilon$ is a matrix of noise terms in the model. More precisely, $\varepsilon_{ij} = \sigma_{ij}^2$ represents zero-mean noise in the model.

\eat{ 
. Let the two lower dimension matrices be represented by $U \in \mathbb{R}^{d \times m}$ and $V \in \mathbb{R}^{d \times n}$, where each column $U_i$ and $V_j$ represents a specific user and item respectively. The objective is to use the constructed user and item profiles to predict the unknown ratings. Let $\rHat$ represent the matrix of rating predictions. The predicted rating for user $u_i$ and item $v_j$ can be characterized as an interaction between their $d$-dimensional vectors: $\rHat_{ij} = U_i^T V_j + \varepsilon_{ij}$, where $\varepsilon_{ij} = \sigma_{ij}^2$ represents zero-mean noise in the model. 
} 

The conditional distribution over the observed ratings is given by 
\begin{equation}
\Prob[R|U, V, \Sigma] = \prod^m_{i=1} \prod^n_{j=1} [\sN(R_{ij}|U_i^TV_j, \sigma_{ij}^2)]^{\delta_{ij}}
\end{equation}

\vspace*{-1.5ex} 
\noindent 
where 
$\Sigma$ is a $d \times d$ covariance matrix, and $\delta_{ij}$ is an indicator function with value $1$ if user $u_i$ rated item $v_j$, and $0$ otherwise. 

\eat{
 Zero-mean  Gaussian priors are placed on $U, V$ as follows,
\begin{equation}
\Prob[U|\Sigma_U] =  \prod^m_{i=1} \sN(U_i|\mathbf{0}, \sigma_{u_i}^2I),  \hspace{1em}   \Prob[V|\Sigma_V] =  \prod^n_{j=1} \sN(V_j|\mathbf{0}, \sigma_{v_j}^2I)
\end{equation}

Taking the log over the posterior gives us (see \cite{pmf}): 
\begin{align}
\ln \Prob[U, V|R, \Sigma, \Sigma_V, \Sigma_U]  = -\frac{1}{2} \sum_{i=1}^m \sum_{j=1}^n \frac{\delta_{ij}}{\sigma_{ij}^2} (R_{ij} - U_i^TV_j)^2 \nonumber \\
 -\frac{1}{2}\sum_{i=1}^m \frac{U_i^T U_i}{\sigma_{u_i}^2} -\frac{1}{2}\sum_{j=1}^n \frac{V_j^T V_j}{\sigma_{v_j}^2} \nonumber \\
  - \frac{1}{2} \Bigg( \sum_{i=1}^m \sum_{j=1}^n \delta_{ij} \ln \sigma_{ij}^2  + d \sum_{i=1}^m \ln \sigma_{u_i}^2  + d \sum_{j=1}^n \ln \sigma_{v_j}^2 \Bigg) + C
\end{align}
}
Algorithms like gradient descent or alternating least squares can be used to optimize the resulting log posterior, which is a non-convex optimization problem.

\subsection{Problem Statement}\label{pd}
\label{sec:prob-stmt} 

Consider a MF model $(U, V)$ trained on an observed ratings matrix $R$, by minimizing a loss function such as squared error between $R$ and the predicted ratings $\hat{R} = U^T V$ (with some regularization). Let $\uell$ be a cold-start user whose profile needs to be learned by recommending a small number of items to $\uell$. Each item $v_j$ recommended to $\uell$ can be viewed as a probe or ``interview question'' to gauge $\uell$'s interest profile. Since there is a natural limit on how many probe items we can push to a user before saturation or apathy sets in, we assume a budget $b$ on the \# probe items. We denote the true profile of $\uell$ by $U_{\ell}$ and the learned profile (using her feedback  on the $b$ items) as $\uHat$. Our objective is to select $b$ items that minimizes the expected error in the learned profile $\uHat$ compared to the true profile $U_{\ell}$. We next formally state the problem studied in this paper. 

\noindent 
\begin{problem}[Optimal Interview Design] 
\label{prob-oid} 
Given user latent vectors $U$, item latent vectors $V$, cold start user $u_{\ell}$, and a budget $b$, find the $b$ best items to recommend to $u_{\ell}$ such that $E[||\uHat - U_{\ell}||^2_F]$ is minimized. 
\end{problem}

\begin {table}[H]
    \caption {Notations Table} \label{tab:notations} 
\begin{center}
    \begin{tabular}{| l | p{6cm} |}
    \hline
    \textbf{Notation} & \textbf{Interpretation} \\ \hline
    $R^{m\times n}$ & Rating matrix\\ \hline
    $U, V$ & User and item latent factor matrices\\ \hline
    $U_i, V_j$ & Latent vector for user $u_i$ and item $v_j$\\ \hline
    $\hat{R}$ & Matrix of predicted ratings \\ \hline
    $\uell$ & Cold start user \\ \hline
    $\uHat$ & Estimated latent vector of cold user\\ \hline
    $B$ & Recommended items to $\uell$  \\ \hline
    $C$ & Diagonal covariance matrix with $\sigma_1, ..., \sigma_n$ on diagonal positions \\
    \hline
    \end{tabular}
\end{center}
\end {table}


\section{Solution Framework} 
\label{sec:sfw} 
\label{sec:extended-pmf} 
A first significant challenge in solving Problem~\ref{prob-oid} is that in order to measure how good our current estimate the user profile is, we need to know the actual profile of the cold user, on which we have no information! In this section, we devise an approach for measuring the error in the estimated user profile, which intelligently circumvents this problem (see Lemma~\ref{mod}).

Note that using the MF framework described in Section \ref{sec:pmf}, we obtain low dimensional latent factor matrices $U, V$. In the absence of any further information, we assume that the latent vector of the cold user ``truly'' describes her profile.\footnote{There may be a high variance associated with $U_{\ell}$.} Notice that the budget $b$ on the number of allowed interview/probe items is typically a small number. Following prior work~\cite{anava2015budget, rendle2008online, sarwar2002incremental}, we assume that the responses of the cold user $\uell$ to this small number of items does not significantly change the latent factor matrix $V$ associated with items. Under this assumption, we can perform local updates to $U_{\ell}$ as the ratings from $\uell$ on the $b$ probe items are available. 
\eat{ 
Moreover, as we design the interview plan for the cold user, we only update the cold user profile $U_{\ell}$, while keeping the item matrix $V$ unchanged, as the interview plan  contains a rather small set of questions (budgeted by $b$) that are unlikely to make any significant change in $V$. Not only that, this assumption allows us to re-train the PMF model more efficiently after every question. Both of these assumptions are indeed realistic in recommender system and widely adopted in prior work~\cite{anava2015budget}. 
} 
A second challenge is that we consider a \emph{batch setting} for our problem. This means that we should select the $b$ items without obtaining explicit feedback from the cold user. We overcome this challenge by estimating the feedback rating the user $\uell$ would provide according to the current model. Specifically, we estimate cold user $\uell$'s rating on an item $v_j$ as $R_{\ell j} = \hat{R}_{\ell j} + \varepsilon_{\ell j} = V_j^TU_{\ell} + \varepsilon_{\ell j}$, where $\varepsilon_{\ell j}$ is a noise term associated with the user-item pair $(\uell, v_j)$.

Let $R_{\ell}$ denote the vector containing the ratings of the cold user $\uell$ on the $b$ items presented to her, and let $V_B$ be the $d \times b$ latent factor matrix corresponding to these $b$ items. We assume that the noise in estimating the ratings $\rHat$ depends on the item under consideration, i.e., $\mathbb{E}[\varepsilon_{ij}^2] = \sigma_{v_j}^2$, for all users $u_i$. This gives us the following posterior distribution,
$$
\Prob[U_{\ell}|R_{\ell}, V_B, C_B^2] \propto \sN(R_{\ell}|V_B^TU_{\ell}, C_B^2) \sN(U_{\ell}|\mathbf{0}, \sigma_{u_{\ell}}^2I)
$$ 
where $C_B$ is a $b \times b$ diagonal matrix with $\sigma_1, \sigma_2, ..., \sigma_b$ at positions corresponding to the items in $B$. Using Bayes rule for Gaussians, we obtain $\Prob[U_{\ell}|R_{\ell}, V_B, C_B^2] \propto \sN (U_{\ell}|\uHat, \Sigma_B)$, where $\uHat = \Sigma_B V_B C_B^{-2} R_{\ell}$ and $\Sigma_B =  (\sigma^{-2}_{u_{\ell}}I + V_B C_B^{-2} V_B^T)^{-1}$. Setting  $\gamma = \sigma^{-2}_{u_{\ell}}$, the estimate $\uHat$ of the cold user's true latent factor vector $U_{\ell}$ can be obtained using a ridge estimate. More precisely, 
\begin{align}
\uHat = (\gamma I + V_B C_B^{-2} V_B^T)^{-1} V_B C_B^{-2} R_{\ell}  \label{eq:uhat-estimate-non-identical}
\end{align}
Here, $\gamma$ is mainly used to ensure that the expression is invertible.

Under this assumption, we next show that solving Problem~\ref{prob-oid} reduces to minimizing $tr((V_B C_B^{-2} V_B^T)^{-1})$, where $tr(M)$ denotes the trace of a square matrix $M$ i.e., the sum of its diagonal elements. More precisely, we have: 

\begin{lemma}\label{mod} 
Given user latent vectors $U$, item latent vectors $V$, cold start user $\uell$, and budget $b$, a set of $b$ items $B$ minimizes $E[||\uHat - U_{\ell}||^2_F]$ iff it minimizes $tr((V_B C_B^{-1} V_B^T)^{-1})$, where $V_B$ is the submatrix of $V$ corresponding to the $b$ selected items. 
\end{lemma}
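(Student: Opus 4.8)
The plan is to compute $E[\|\uHat - U_\ell\|_F^2]$ explicitly using the ridge estimate in Equation~\eqref{eq:uhat-estimate-non-identical} and the noise model, and show that it equals a constant plus (a monotone function of) $tr((V_B C_B^{-2} V_B^T)^{-1})$, so that the minimizing set $B$ is the same for both objectives. First I would substitute $R_\ell = V_B^T U_\ell + \varepsilon_\ell$ into Equation~\eqref{eq:uhat-estimate-non-identical}, where $\varepsilon_\ell$ is the vector of noise terms with $E[\varepsilon_\ell] = \mathbf{0}$ and $E[\varepsilon_\ell \varepsilon_\ell^T] = C_B^2$. Writing $M_B := V_B C_B^{-2} V_B^T$ and $A_B := (\gamma I + M_B)^{-1}$, this gives $\uHat = A_B V_B C_B^{-2}(V_B^T U_\ell + \varepsilon_\ell) = A_B (M_B + \gamma I - \gamma I) U_\ell + A_B V_B C_B^{-2}\varepsilon_\ell = U_\ell - \gamma A_B U_\ell + A_B V_B C_B^{-2}\varepsilon_\ell$. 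Hence $\uHat - U_\ell = -\gamma A_B U_\ell + A_B V_B C_B^{-2}\varepsilon_\ell$.

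Next I would take expectations of the squared Frobenius norm (here just the squared Euclidean norm of a vector). Using $\|x\|^2 = tr(xx^T)$ and linearity, the cross term vanishes since $E[\varepsilon_\ell] = \mathbf{0}$ (treating $U_\ell$ as fixed given the trained model), leaving a bias term $\gamma^2 \|A_B U_\ell\|^2$ and a variance term $E[tr(A_B V_B C_B^{-2}\varepsilon_\ell \varepsilon_\ell^T C_B^{-2} V_B^T A_B)] = tr(A_B V_B C_B^{-2} C_B^2 C_B^{-2} V_B^T A_B) = tr(A_B M_B A_B)$. So $E[\|\uHat - U_\ell\|^2] = \gamma^2 U_\ell^T A_B^2 U_\ell + tr(A_B M_B A_B)$. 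Writing $M_B = A_B^{-1} - \gamma I$, the variance term is $tr(A_B(A_B^{-1} - \gamma I)A_B) = tr(A_B) - \gamma\, tr(A_B^2)$, and combining, $E[\|\uHat - U_\ell\|^2] = tr(A_B) - \gamma\, tr(A_B^2) + \gamma^2 U_\ell^T A_B^2 U_\ell$. In the regime where $\gamma = \sigma_{u_\ell}^{-2}$ is small (its stated role is merely to ensure invertibility), the dominant term is $tr(A_B) = tr((\gamma I + M_B)^{-1})$, which in the limit $\gamma \to 0$ becomes $tr(M_B^{-1}) = tr((V_B C_B^{-2} V_B^T)^{-1})$ — exactly the claimed objective. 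I would argue that for the purpose of comparing candidate sets $B$, the lower-order terms in $\gamma$ are negligible, so minimizing the expected error is equivalent to minimizing $tr((V_B C_B^{-2} V_B^T)^{-1})$.

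The main obstacle I anticipate is the handling of the regularizer $\gamma$: strictly speaking, the clean equivalence $E[\|\uHat - U_\ell\|^2] \leftrightarrow tr((V_B C_B^{-2} V_B^T)^{-1})$ holds exactly only in the limit $\gamma \to 0$ (equivalently, an uninformative prior on $U_\ell$, $\sigma_{u_\ell}^2 \to \infty$), and for $\gamma > 0$ there are correction terms that depend on $U_\ell$ itself, which is unknown. I would address this by making explicit the modeling assumption (already implicit in the text's remark that $\gamma$ "is mainly used to ensure that the expression is invertible") that we work in the small-$\gamma$ regime, so that $A_B = (\gamma I + M_B)^{-1} \approx M_B^{-1}$ and the objective reduces cleanly; alternatively one notes $V_B C_B^{-2} V_B^T$ is invertible whenever the $b$ chosen item vectors span $\mathbb{R}^d$ (requiring $b \geq d$), in which case one can set $\gamma = 0$ outright. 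A secondary minor point is the discrepancy between $C_B^{-2}$ in the derivation of $\uHat$ and $C_B^{-1}$ as written in the lemma statement; I would treat this as a typographical inconsistency and carry $C_B^{-2}$ (equivalently, absorbing it into a redefinition of the per-item weights), noting that either way the structure of the argument is unchanged.
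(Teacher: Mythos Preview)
Your proposal is correct and follows essentially the same route as the paper: substitute $R_\ell = V_B^T U_\ell + \varepsilon_B$ into the ridge estimate, write the error as a linear function of the noise, and use $E[\varepsilon_B\varepsilon_B^T]=C_B^2$ together with the trace trick to collapse the expectation to $tr((V_B C_B^{-2} V_B^T)^{-1})$. The only difference is one of presentation: the paper simply \emph{assumes $\gamma=0$ at the outset} (and that $V_B C_B^{-2} V_B^T$ is invertible), which makes the bias term vanish immediately and yields the exact equality in one line, whereas you carry $\gamma$ through, obtain the extra $-\gamma\,tr(A_B^2)+\gamma^2 U_\ell^T A_B^2 U_\ell$ terms, and then argue they are negligible in the small-$\gamma$ regime. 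Your version is more careful about where the approximation enters; the paper's version is cleaner but hides the same assumption in a single sentence. You are also right that the $C_B^{-1}$ in the lemma statement is a typo for $C_B^{-2}$, consistent with the rest of the paper.
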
 

\begin{proof}
Our goal is to select $b$ items such that using her feedback on those items, we can find the estimate of the latent vector $\uHat$ of the cold user $u_{\ell}$, that is as close as possible to the true latent vector vector $U_{\ell}$. 

Equation \ref{eq:uhat-estimate-non-identical} gives us an estimate for $\uHat$. For simplicity, we will assume that $\gamma = 0$, and that $V_B C_B^{-2} V_B^T$ is invertible.

$R_{\ell}$ can be expressed as $V_B^TU_{\ell} + \varepsilon_B$, where $\varepsilon_B$ is a vector of the $b$ zero-mean noise terms corresponding to the $b$ items. Replacing this in Equation \ref{eq:uhat-estimate-non-identical}, we get
\begin{align}
\uHat = U_{\ell} + (V_B C_B^{-2} V_B^T)^{-1} V_B C_B^{-2} \varepsilon_B \nonumber \\
\Rightarrow \uHat - U_{\ell} = (V_B C_B^{-2} V_B^T)^{-1} V_B C_B^{-2} \varepsilon_B \label{eq:uhat-ui}
\end{align}

From Equation \ref{eq:uhat-ui}, it is clear that the choice of the $b$ interview items determines how well we are able to estimate $\uHat$.
The expected error in the estimated user profile is 
\begin{equation}
\begin{split}
E[||\uHat-U_{\ell}||_F^2] = E[tr((\uHat-U_{\ell})(\uHat-U_{\ell})^T)] \label{eq:error}
\end{split}
\end{equation}

Replacing Equation \ref{eq:uhat-ui} in Equation \ref{eq:error} and simplifying, we get 
 
\begin{align} 
E[tr((\uHat - U_{\ell})(\uHat - U_{\ell})^T)] = \nonumber \\
E[tr((V_B C_B^{-2} V_B^T)^{-1} V_B C_B^{-2} \varepsilon_B \varepsilon_B^T (C_B^{-2})^T V_B^T (V_B C_B^{-2} V_B^T)^{-1})] \nonumber \\
								 = tr((V_B C_B^{-2} V_B^T)^{-1}) \label{eq:obj-func}
\end{align}

The second equality above follows from from replacing $E[\varepsilon_B \varepsilon_B^T] = C_B^2$ and simplifying the algebra. The lemma follows. 
\end{proof}

In view of the lemma above, we can instantiate Problem~\ref{prob-oid} and restate it as follows. 
\noindent 
\begin{problem}[{\bf Optimal Interview Design (OID)}] 
\label{probnew-oid} 
Given user latent vectors $U$, item latent vectors $V$, cold start user $u_{\ell}$, and a budget $b$, find the $b$ best items to recommend to $u_{\ell}$ such that $E[||\uHat - U_{\ell}||^2_F] = tr((V_B C_B^{-2} V_B^T)^{-1})$ is minimized. 
\end{problem} 

For a square matrix $\calM$, we define $f(\calM) := tr((\calM \calM^T)^{-1})$. Note that for our objective function, setting $\calM = V_B C_B$ we get $f(V_B C_B) = tr((V_B C_B^{-2} V_B^T)^{-1})$.
Since the lemma shows that Problem~\ref{prob-oid} is essentially equivalent to Problem~\ref{probnew-oid}, we focus on the latter problem in the rest of the paper.

 \section{Technical Results} 
In this section, we study the hardness and approximation of the OID problem we proposed. 
\subsection{Hardness} 
\label{sec:hardness}

Our first main result in this section is: 

\begin{theorem} 
\label{thm-oid}  
The optimal interview design (OID) problem (Problem ~\ref{probnew-oid}) is NP-hard.  
\end{theorem}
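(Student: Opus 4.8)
The plan is to reduce from \emph{Exact Cover by 3-Sets} (X3C): given a ground set $X=\{x_1,\dots,x_{3q}\}$ and a family $\mathcal{C}=\{S_1,\dots,S_m\}$ of $3$-element subsets of $X$, decide whether some subfamily of $\mathcal{C}$ partitions $X$. Given such an instance I would build an OID instance with latent dimension $d=3q$ and budget $b=3q$, all noise variances equal to $1$ (so $C_B=I$ and the objective reads $\mathrm{tr}\big((V_BV_B^{T})^{-1}\big)$), and with three items per set $S_i$: if $S_i=\{x_a,x_b,x_c\}$ with $a<b<c$, its three item vectors $u_{i,1},u_{i,2},u_{i,3}$ are the rows of one fixed $3\times 3$ orthogonal matrix, embedded on coordinates $a,b,c$ (for concreteness the normalizations of $(1,1,1)$, $(1,2,-3)$, $(-5,4,1)$, which are pairwise orthogonal and have no zero entry). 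Thus each $u_{i,k}$ is a unit vector supported exactly on $S_i$, and $\sum_{k=1}^{3}u_{i,k}u_{i,k}^{T}$ equals the orthogonal projection onto $\mathrm{span}\{e_j:x_j\in S_i\}$. The decision question posed to the OID instance is whether its optimum is at most $3q$.

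The observation that drives the reduction is this: for any set $B$ of $b=3q$ items, $M_B:=V_BV_B^{T}=\sum_{u\in B}uu^{T}$ is a $3q\times 3q$ PSD matrix with $\mathrm{tr}(M_B)=\sum_{u\in B}\|u\|^{2}=3q$; so if $M_B$ is nonsingular its eigenvalues $\mu_1,\dots,\mu_{3q}>0$ sum to $3q$, and by the Cauchy--Schwarz (equivalently AM--HM) inequality $\mathrm{tr}(M_B^{-1})=\sum_i 1/\mu_i\ge (3q)^{2}/\sum_i\mu_i=3q$, with equality iff every $\mu_i=1$, i.e.\ iff $M_B=I$ (and the value is $+\infty$ when $M_B$ is singular). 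Hence the OID optimum is always $\ge 3q$, and it equals $3q$ exactly when some $3q$-subset $B$ of items has $M_B=I$.

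The forward direction is easy: if $\mathcal{C}'=\{S_{i_1},\dots,S_{i_q}\}$ is an exact cover I take $B$ to consist of all $3q$ item vectors of these $q$ sets, and then $M_B$ is the sum of the projections onto the mutually orthogonal subspaces $\mathrm{span}\{e_j:x_j\in S_{i_\ell}\}$, which add up to $\sum_{j=1}^{3q}e_je_j^{T}=I$, so the optimum is $3q$. The converse is the hard part. If $|B|=3q$ and $M_B=I$, then stacking $B$ as a $3q\times 3q$ matrix $W$ gives $WW^{T}=I$, so $W$ is orthogonal and $B$ is an orthonormal basis of $\mathbb{R}^{3q}$. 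Here the fixed $3\times 3$ pattern earns its keep: for two \emph{distinct} sets $S_i\neq S_j$ with $S_i\cap S_j\neq\emptyset$, I would show that every pair $(u_{i,k},u_{j,l})$ has $\langle u_{i,k},u_{j,l}\rangle\neq 0$ --- a single nonzero product when $|S_i\cap S_j|=1$, and a sum of two such products when $|S_i\cap S_j|=2$, which a finite check (over the possible positions of the two shared coordinates inside $S_i$ and $S_j$) shows never vanishes. Since $B$ is orthonormal, no two distinct overlapping sets can both contribute to $B$; so the sets contributing to $B$ are pairwise disjoint $3$-sets, and because their coordinate spans must together exhaust $\mathbb{R}^{3q}$ there are exactly $q$ of them and they cover all of $X$, each contributing all three of its items --- an exact cover. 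As the construction is clearly polynomial, this yields NP-hardness.

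The step I expect to be the main obstacle is precisely this converse: excluding ``spurious'' orthonormal bases built from partial selections within a set or from items of pairwise-overlapping sets. This is where the explicit, carefully chosen per-set orthonormal basis is needed (a generic rational basis would also do, but is clumsier to specify in polynomial time), together with the dimension-counting argument that forces the contributing sets into an exact cover.
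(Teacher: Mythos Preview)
Your reduction is correct and genuinely different from the paper's. Both arguments reduce from X3C and both ultimately hinge on the AM--HM inequality applied to the eigenvalues of $V_BV_B^{T}$, but the gadgets are quite different. The paper encodes each $3$-set by a \emph{single} binary indicator vector and then adds $k=3q$ ``dummy'' vectors $\eta\,e_j$; the budget is $q+k$, and a carefully tuned $\eta$ (Claim~3) is needed to force any optimal $B$ to contain all dummies, after which an eigenvalue/graph argument (Claim~4, Lemmas~1--2) shows the remaining $q$ set vectors must be pairwise disjoint. Your construction instead encodes each $3$-set by \emph{three} unit vectors (the rows of a fixed $3\times 3$ orthogonal matrix with no zero entry, supported on the set's coordinates), takes budget $b=d=3q$, and uses the clean fact that $\mathrm{tr}(V_BV_B^{T})=3q$ for every $B$ to get the threshold $3q$ directly, with equality iff $V_BV_B^{T}=I$.

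What each approach buys: yours avoids the dummy items and the parameter $\eta$ entirely, giving a simpler decision threshold and a shorter argument; the price is the explicit $3\times 3$ basis and the finite case analysis that no two vectors coming from distinct overlapping sets are orthogonal (the overlap-$1$ case is immediate from ``no zero entry''; the overlap-$2$ case is the $3\cdot 3\cdot 3\cdot 3$ check you flag, which indeed goes through for your specific basis). The paper's construction keeps the items closer to the combinatorics (one binary vector per set), which makes the eigenvalue structure of $\calB'\calB'^{T}$ more transparent and feeds directly into their later inapproximability argument (Theorem~\ref{thm-oid-approx}), but at the cost of the extra Claims~2--4 and the $\eta$ calibration. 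Your counting step in the converse (pairwise disjoint contributing sets plus $|B|=3q$ forces exactly $q$ sets, each contributing all three vectors, hence an exact cover) is sound.
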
 

The proof of this theorem is fairly non-trivial. 
We establish this result by proving a number of results along the way. For our proof, we consider the special case where the items variances are identical, i.e., $\sigma_{v_1}^2 = \sigma_{v_2}^2 = ... = \sigma_{v_n}^2 = \sigma^2$ and $\lambda = \frac{\sigma^2}{\sigma^2_{u_{\ell}}}$. Then $C_B = \sigma I$, and plugging it in to Equation~\ref{eq:obj-func} yields $E[||\uHat-U_{\ell}||_F^2] = \sigma^2 \cdot tr((V_B V_B^T)^{-1})$. We prove hardness for this restricted case. The hardness of the general case follows. 

The proof is by reduction from the well-known NP-complete problem Exact Cover by 3-Sets (X3C)~\cite{garey1979}.  
 
\noindent  
{\bf Reduction}: Given a collection $\sS$ of 3-element subsets of a set $X$, where $|X| = 3q$, X3C asks to find a subset $\sS^*$ of $\sS$ such that each element of $X$ is in exactly one set of $\sS^*$. Let $(X,\sS)$  be an instance of X3C, with $X = \{x_1, ..., x_{3q}\}$ and $\sS=\{S_1, ..., S_n\}$. Create an instance of OID as follows. Let the set of items be $\cali = \{a_1, ..., a_n, d_1, ..., d_k\}$, where $k=3q$, item $a_j$  corresponds to set $S_j$,  $j\in[n]$, and $d_j$ are dummy items,  $j\in[k]$.  
Convert each set $S_j$ in $\sS$ into a binary vector ${\bf u_j}$ of length $k$, such that ${ \bf a_j}[i] = 1$ whenever $x_i\in S_j$ and ${ \bf a_j}[i] = 0$ otherwise. Since the size of each subset is exactly 3, we will have exactly three 1's in each vector. These vectors correspond to the item latent vectors of the $n$ items $a_1, a_2, ..., a_n$.  We call them \textit{set vectors} to distinguish them from the vectors corresponding to the dummy items, defined next: for a dummy item $d_j$, the corresponding vector ${ \bf d_j}$ is such that ${ \bf d_j}[j] = \eta$ and ${ \bf d_j}[i]=0$, $i\neq j$. Let $\calW$ be the set of all vectors constructed. We will set the value of $\eta$ later. Thus, $\calW$ is the transformed instance obtained from $(X, \sS)$. Assuming an arbitrary but fixed ordering on the items in $\cali$, we can treat $\calW$ as a $k\times (n+k)$ matrix, without ambiguity. Let $\calA = \{\bf{a_1, ..., a_n}\}$ and $\calD = \{\bf{d_1, ..., d_k}\}$  resp., denote the sets of set vectors and dummy vectors constructed above. We set the budget to $b := q+k$ and the item variances $\sigma_{v_1}^2 = ... = \sigma_{v_n}^2 = 1$. For a set of items $B \subset \cali$, with $|B|=b$, we let $\calB$ denote the $k\times (q+k)$ submatrix of $\calW$ associated with the items in $B$. Formally, our problem is to find $b$ items $B\subset \cali$ that minimize $tr((\calB\calB^T)^{-1})$.   
 
For a matrix $\calM$, recall that $f(\calM) = tr((\calM \calM^T)^{-1})$. Define
\begin{align} 
\theta & := \frac{q}{3 +\eta^2} + \frac{k-q}{\eta^2}. \label{eq-theta} 
\end{align} 
We will show the following claim.  
 
\begin{claim}\label{claim1}  
Let $B \subset \cali$, such that $|B| = k + q$. Then $f(\calB) = \theta$ if  $(\calB \setminus \calD)$ encodes an exact 3-cover of $X$ and $f(\calB) > \theta$, otherwise.  
\end{claim}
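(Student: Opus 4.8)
The plan is to analyze the matrix $\calB\calB^T$ directly, exploiting its block structure. Write $B = B_a \cup B_d$ where $B_a$ are the chosen set-items (contributing set vectors) and $B_d$ are the chosen dummy items (contributing dummy vectors). Since the dummy vector ${\bf d_j}$ is $\eta$ times the standard basis vector $e_j$, the matrix $\calB\calB^T = \sum_{{\bf a}\in B_a}{\bf a}{\bf a}^T + \eta^2\sum_{j: d_j\in B_d} e_j e_j^T$. The first sum is a $k\times k$ matrix whose $(i,i')$ entry counts how many selected sets contain both $x_i$ and $x_{i'}$ (so the diagonal entry at $i$ is the number of selected sets covering $x_i$), and the second sum just adds $\eta^2$ to certain diagonal entries. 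The key structural observation I would push is: because $|B| = q+k$ and there are only $k$ dummy items, we have $|B_a| = q + (k - |B_d|) \ge q$, and in the tight case $|B_a| = q$, $B_d = \calD$ (all dummies chosen), which forces $\calB\calB^T$ to be diagonal exactly when the $q$ chosen sets are pairwise disjoint, i.e. form an exact cover.

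First I would handle the "if" direction: suppose $B_a$ encodes an exact $3$-cover and $B_d = \calD$. Then the $q$ set vectors have disjoint supports, so $\sum_{{\bf a}\in B_a}{\bf a}{\bf a}^T$ is the diagonal $0/1$ matrix that is $1$ on every coordinate (each $x_i$ covered exactly once), and adding $\eta^2 I$ from the dummies gives $\calB\calB^T = (1+\eta^2)I$ on the $q$... wait — more carefully, each coordinate $i$ gets $1$ from its unique covering set plus $\eta^2$ from $d_i$, so $\calB\calB^T = (1+\eta^2)I_k$? That would give $f(\calB) = k/(1+\eta^2)$, not $\theta$. So I must have the count right: exactly $3q = k$ coordinates, each covered once, so in fact with all dummies the matrix is $(1+\eta^2)I_k$. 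Comparing with \eqref{eq-theta}, $\theta = \frac{q}{3+\eta^2} + \frac{k-q}{\eta^2}$ suggests the intended tight configuration is different — presumably $B_d$ is NOT all of $\calD$. Let me reconsider: with $|B_d| = k-q$ dummies and $|B_a| = 2q$ set vectors covering an exact cover twice? Hmm. The cleanest reading consistent with $\theta$: the optimal $B$ takes a set of $q$ set-vectors forming an exact cover AND $k - q$ dummies (one for each of $k-q$ coordinates), wait that's $q + (k-q) = k \ne q+k$. I will instead trust that the right configuration makes $\calB\calB^T$ block-diagonal with $q$ blocks of the form (covered coordinate with a set contribution, giving $3+\eta^2$ on those... ) — precisely, I would identify the configuration where $\calB\calB^T$ is diagonal with $q$ entries equal to $3+\eta^2$ is wrong dimensionally too. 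The safe move in the writeup: carefully set up $\calB\calB^T$, show that for the exact-cover configuration it is diagonal, read off its $k$ diagonal entries, and verify their reciprocal-sum equals $\theta$ by direct computation; I will not commit here to which entries are $3+\eta^2$ versus $\eta^2$ until the block structure is written out.

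For the "only if" / strict-inequality direction, I would argue in two stages. Stage one: among all $B$ for which $\calB\calB^T$ is diagonal, the exact-cover configuration is forced and achieves exactly $\theta$ — diagonality of $\sum {\bf a}{\bf a}^T$ requires the selected set vectors to have pairwise-disjoint supports, which with $q$ of them spanning $3q = k$ coordinates means an exact cover, and the count of dummies is then pinned down. Stage two: if $\calB\calB^T$ is not diagonal, I would invoke a convexity/majorization fact about $M\mapsto tr(M^{-1})$ on positive-definite matrices: fixing the diagonal of a PD matrix, $tr(M^{-1})$ is minimized when $M$ is diagonal (this follows since $tr(M^{-1}) = \sum_i (M^{-1})_{ii} \ge \sum_i 1/M_{ii}$ by the Cauchy–Schwarz / Kantorovich-type inequality $(M^{-1})_{ii} \ge 1/M_{ii}$, with equality iff row $i$ of off-diagonal vanishes). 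Combined with the fact that any non-exact-cover selection either makes some coordinate uncovered (so $\calB\calB^T$ singular, $f = \infty > \theta$) or over-covers some coordinates while under-covering others — and here I would need a separate lemma showing the diagonal-only bound $\sum 1/M_{ii}$ is itself $> \theta$ unless the cover is exact, using convexity of $t\mapsto 1/t$ to show spreading the coverage counts away from the all-ones vector strictly increases $\sum 1/M_{ii}$. Chaining: non-exact-cover $\Rightarrow$ either singular or $f(\calB) \ge \sum_i 1/(\calB\calB^T)_{ii} \ge$ (value at the balanced diagonal) $> \theta$, using strict convexity for the last step.

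The main obstacle I anticipate is the second stage of the strict inequality: controlling $tr((\calB\calB^T)^{-1})$ when the matrix is genuinely non-diagonal. The inequality $(M^{-1})_{ii}\ge 1/M_{ii}$ gives a clean diagonal lower bound, but I then need that this diagonal lower bound, minimized over all admissible coverage-count vectors arising from non-exact-cover selections (subject to $\sum_i (\text{coverage of } x_i) = 3|B_a|$ and the dummy budget), is strictly greater than $\theta$ — this is an integer-optimization argument where I must be careful that "spreading out" the counts is forced and that the value of $\eta$ (still to be chosen) is large enough to make the strictness robust against the interaction between set-coverage counts and dummy placements. Choosing $\eta$ correctly so that no clever trade-off between extra set-vectors and fewer dummies can beat the exact cover is the delicate quantitative heart of the claim.
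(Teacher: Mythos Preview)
Your computation of $\calB\calB^T$ in the exact-cover case is wrong, and this error propagates. You correctly note that the $(i,i')$ entry of $\sum_{{\bf a}\in B_a}{\bf a}{\bf a}^T$ counts the selected sets containing both $x_i$ and $x_{i'}$, but then you drop the off-diagonal entries and declare the matrix to be $(1+\eta^2)I_k$. In fact, when $B_a$ is an exact cover and $B_d=\calD$, the matrix $\calB\calB^T$ is \emph{block}-diagonal with $q$ blocks of the form $J_3+\eta^2 I_3$ (where $J_3$ is the $3\times 3$ all-ones matrix), whose eigenvalues are $3+\eta^2,\eta^2,\eta^2$. Summing over the $q$ blocks gives exactly $q$ eigenvalues equal to $3+\eta^2$ and $k-q=2q$ eigenvalues equal to $\eta^2$, and hence $f(\calB)=\theta$. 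So the ``if'' direction does go through with $B_d=\calD$; your detour into other dummy configurations was chasing a phantom.

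More seriously, your Stage-two plan for the strict inequality cannot work. The bound $(M^{-1})_{ii}\ge 1/M_{ii}$ gives $tr(M^{-1})\ge \sum_i 1/M_{ii}$, but this lower bound can sit strictly \emph{below} $\theta$ even for non-cover selections. Concretely, take $q=2$, $k=6$, $\eta^2=10$, and two set vectors overlapping in one coordinate (so one element is covered twice and one is uncovered). The diagonal of $\calB\calB^T$ is $(11,11,12,11,11,10)$, giving $\sum 1/M_{ii}\approx 0.547$, whereas $\theta=2/13+4/10\approx 0.554$. So your diagonal bound is too weak to separate non-covers from $\theta$; the subsequent convexity argument on coverage counts cannot rescue it, because even the exact-cover diagonal sum $k/(1+\eta^2)$ is already below $\theta$.

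The paper's argument avoids this by working with eigenvalues rather than diagonal entries. The key structural fact you are missing is that, once all $k$ dummies are present, $\calB\calB^T=\calB'\calB'^T+\eta^2 I$ with $\calB'$ of rank $q$; hence \emph{regardless} of whether $\calB'$ is a cover, exactly $k-q$ eigenvalues of $\calB\calB^T$ equal $\eta^2$, and the remaining $q$ eigenvalues sum to $q(3+\eta^2)$. The comparison with $\theta$ then reduces to comparing $\sum_{j=1}^q 1/\lambda_j$ against $q/(3+\eta^2)$ under the constraint $\sum_{j=1}^q\lambda_j=q(3+\eta^2)$, which is exactly AM--HM, with equality iff all $\lambda_j$ are equal --- and a separate combinatorial claim shows that equality of these $q$ eigenvalues forces $\calB'$ to encode an exact cover. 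The case where not all dummies are selected is handled beforehand by a quantitative choice of $\eta$ (large enough that dropping any dummy strictly hurts), which you gestured at but did not pin down.
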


Notice that Theorem~\ref{thm-oid} follows from Claim~\ref{claim1}: if there is a polynomial time algorithm for solving OID, then we can run it on the reduced instance of OID above and find the $b$ items $B$ that minimize $f(\calB)$. Then by checking if $f(\calB) = \theta$, we can verify if the given instance of X3C is a YES or a NO instance.  
 
In what follows, for simplicity, we will abuse notation and use $\calA, \calB, \calW$ both to denote sets of vectors and the matrices formed by them, relative to the fixed ordering of items in $\cali$ assumed above. We will freely switch between set and matrix notations.  

We first establish a number of results which will help us prove the above  claim. 
Recall the transformed instance $\calW$ of OID obtained from the given X3C instance. 
The next claim characterizes the trace of $\calB\calB^T$ for matrices $\calB \subset \calW$ that include all $k$ dummy vectors of $\calW$. 

\begin{claim} 
\label{claim2} 
\label{claim:equal_trace} 
Consider any $\calB\subset \calW$ such that $|\calB| = k + q$ and $\calB$ includes all the $k$ dummy vectors. Then $tr(\calB \calB^T) = k + k \cdot \eta^2$.  
\end{claim}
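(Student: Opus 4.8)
The plan is to reduce $tr(\calB\calB^T)$ to a simple sum of squared column norms. The key observation is that for any real matrix $M$, one has $tr(MM^T) = \sum_{i,c} M_{ic}^2 = ||M||_F^2$; that is, $tr(MM^T)$ equals the sum of the squared Euclidean norms of the columns of $M$. Applying this with $M = \calB$ gives $tr(\calB\calB^T) = \sum_{\mathbf{v}} ||\mathbf{v}||^2$, where the sum ranges over the column vectors $\mathbf{v}$ of $\calB$.

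Next I would account for which vectors occur as columns of $\calB$. By hypothesis $\calB$ contains all $k$ dummy vectors of $\calD$, and since $|\calB| = k + q$, the remaining $q$ columns are set vectors from $\calA$. Each dummy vector $\mathbf{d_j}$ has exactly one nonzero entry, equal to $\eta$, so $||\mathbf{d_j}||^2 = \eta^2$; each set vector $\mathbf{a_j}$ is binary with exactly three $1$'s (as $|S_j| = 3$), so $||\mathbf{a_j}||^2 = 3$. Summing over the $q$ set vectors and the $k$ dummy vectors yields $tr(\calB\calB^T) = 3q + k\eta^2$, and recalling that $k = 3q$ by construction, this equals $k + k\eta^2$, as claimed.

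There is no substantive obstacle here; the only point needing (minor) care is the bookkeeping that exactly $q$ columns of $\calB$ are set vectors, which is forced by $|\calB| = k + q$ together with the assumption that all $k$ dummy vectors are present. The role of this claim is as a stepping stone toward Claim~\ref{claim1}: it shows that the trace of $\calB\calB^T$ is identical for every feasible $\calB$ of this form, so that whatever distinguishes an exact $3$-cover from a non-cover in the value of $f(\calB) = tr((\calB\calB^T)^{-1})$ must come entirely from the off-diagonal (equivalently, eigenvalue) structure of $\calB\calB^T$, not from its trace.
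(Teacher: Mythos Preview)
Your proposal is correct and follows essentially the same approach as the paper: both compute $tr(\calB\calB^T)$ as the squared Frobenius norm of $\calB$, i.e., the total sum of squared entries, and then count contributions from the $q$ set vectors (each contributing $3$) and the $k$ dummy vectors (each contributing $\eta^2$), using $3q=k$ to conclude. The paper organizes the computation slightly differently by first splitting $\calB\calB^T = \calB'\calB'^T + \calD\calD^T$ and taking the trace of each summand, but this is the same argument.
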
 

\begin{proof} 
Let $\calB' = \calB -\calD$. We have  
$tr(\calB \calB^T)  = tr(\calB' \calB'^T + \calD \calD^T)$  
$= tr(\calB' \calB'^T) + tr(\eta^2I) = \sum_{i=1}^k \sum_{j=1}^q b_{ij}^2 + k\eta^2$.  
As $\calB'$ is a binary matrix, $\sum_{i=1}^k \sum_{j=1}^q b_{ij}^2 = \sum_{i=1}^k ||b_{*i}||_0 $, where $b_{*i}$ is the $i$th row, and $|| \cdot ||_0$ is the $l_0-$norm. This is nothing but the total number of 1's in $\calB'$, which is $3q = k$. 
Thus, $tr(\calB \calB^T) = k +  k\eta^2$. 
\end{proof}

The next claim shows that among such subsets $\calB\subset\calW$, the ones that include all dummy vectors have the least $f(.)$-value, i.e., have the minimum value of $tr((\calB\calB^T)^{-1})$. Recall that $\calD = \{{\bf d_1, ..., d_k}\}$ is the set of dummy vectors constructed from the given instance of X3C. 

\begin{claim} \label{claim3}  
For any subset $\calA \subset \calW$, with $|\calA| = k+q$, such that  
$\calD \not\subset \calA$, there exists $\calA'$, with $|\calA'|=k+q$ and $\calD\subset \calA'$, such that $f(\calA') < f(\calA)$.  
\end{claim}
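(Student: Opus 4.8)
The plan is to argue that swapping a non-dummy (set) vector for a missing dummy vector can only decrease $f(\calB) = tr((\calB\calB^T)^{-1})$, so iterating this swap lands us at a subset containing all of $\calD$ with strictly smaller $f$-value (strictness because, by hypothesis, at least one swap is actually performed). So suppose $\calD \not\subset \calA$; pick some dummy vector ${\bf d_j} \notin \calA$, and (since $|\calA| = k+q > k$ and $|\calA\cap\calD| \le k-1$, so $\calA$ contains at least $q+1 \ge 1$ set vectors) pick a set vector ${\bf a} \in \calA$. Let $\calA'$ be obtained from $\calA$ by removing ${\bf a}$ and adding ${\bf d_j}$. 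I want to show $f(\calA') \le f(\calA)$; applying this repeatedly (each step removes one more "missing dummy" deficiency) gives the claim, with strict inequality overall because the very first swap is strict — or, more cleanly, it suffices to prove each individual swap is \emph{strict}, which I will aim for.

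The key computational tool is the effect of a rank-one update on $\calB\calB^T = \sum_{{\bf v}\in\calB} {\bf v}{\bf v}^T$. Writing $M = \calA\calA^T$, $M' = \calA'\calA'^T = M - {\bf a}{\bf a}^T + {\bf d_j}{\bf d_j}^T$, I would first replace ${\bf a}$ by ${\bf a}' := {\bf a} - \langle {\bf a}, {\bf e_j}\rangle {\bf e_j}$ (zeroing out ${\bf a}$'s $j$-th coordinate) — note $\langle{\bf a},{\bf e_j}\rangle\in\{0,1\}$, and ${\bf d_j} = \eta\,{\bf e_j}$. The point of this decomposition is that ${\bf a}'$ and ${\bf e_j}$ have disjoint support, so I can analyze the two rank-one changes separately: (i) shrinking ${\bf a}$ to ${\bf a}'$ removes a PSD term $({\bf a}{\bf a}^T - {\bf a}'{\bf a}'^T)$, which by the Sherman–Morrison formula (or just monotonicity of $X \mapsto tr(X^{-1})$ under the Loewner order, applied with care since we need $M'$ to stay invertible — here the presence of $\gamma$, or the argument that $\calA'$ still spans, handles invertibility) can only \emph{increase} $tr(M^{-1})$; (ii) adding back ${\bf d_j}{\bf d_j}^T = \eta^2 {\bf e_j}{\bf e_j}^T$ plus the leftover $\langle{\bf a},{\bf e_j}\rangle^2 {\bf e_j}{\bf e_j}^T$ adds a PSD term, which \emph{decreases} $tr(\cdot^{-1})$. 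The subtlety is that these two effects push in opposite directions, so pure Loewner monotonicity is not enough — I will need to quantify. Concretely, I expect to show that after the swap the $j$-th coordinate direction ${\bf e_j}$ gets a large eigenvalue contribution $\eta^2$ from ${\bf d_j}$, whereas the set vector ${\bf a}$ contributed only a spread-out rank-one term; choosing $\eta$ large enough (consistent with the value to be fixed later in the reduction) makes the net change favorable. I would use Sherman–Morrison twice to get an explicit formula for $tr(M'^{-1}) - tr(M^{-1})$ of the form (removal term) $-$ (addition term), and then bound it.

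The main obstacle, as flagged above, is handling the \emph{opposing} directions of the two rank-one updates and pinning down the quantitative condition on $\eta$ under which the net effect is a strict decrease — a bare application of operator monotonicity of $tr(X^{-1})$ does not suffice. A secondary technical point is ensuring $\calA'\calA'^T$ remains invertible after removing a set vector; I would address this either by carrying the regularizer $\gamma I$ through (as in Equation~\ref{eq:uhat-estimate-non-identical}) and then taking $\gamma \to 0$, or by observing that $\calA'$ still contains all needed directions because the $k$ dummy-related coordinates are covered once we are close to containing $\calD$, and the remaining set vectors plus the added dummy keep full row rank $k$. Once the single-swap strict inequality is in hand, a finite induction on $|\calD \setminus \calA|$ finishes the proof.
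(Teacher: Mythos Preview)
Your swap-based strategy is genuinely different from the paper's argument, but it contains a concrete error and leaves the decisive step unfinished. The error is in step (i): the difference ${\bf a}{\bf a}^T - {\bf a}'{\bf a}'^T$ is \emph{not} PSD in general. Take ${\bf a} = (1,1,1,0,\dots,0)^T$ with $j=1$, so ${\bf a}' = (0,1,1,0,\dots,0)^T$; for $x = (1,-1,0,\dots,0)^T$ one gets $x^T({\bf a}{\bf a}^T - {\bf a}'{\bf a}'^T)x = 0 - 1 < 0$. So your ``shrinking'' move is not a Loewner decrease, and the ``leftover'' accounting in step (ii) is correspondingly muddled. You can of course fall back to the crude two-step (remove ${\bf a}{\bf a}^T$, then add ${\bf d_j}{\bf d_j}^T$), but then you are exactly at the obstacle you flag and do not resolve: Sherman--Morrison gives $tr(M'^{-1}) - tr(M^{-1})$ as a difference of two positive quantities, and showing it is negative for \emph{every} admissible $M$, ${\bf a}$, and missing index $j$ requires a quantitative bound you never supply. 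It is not even clear the single swap is strict for an \emph{arbitrary} choice of set vector to remove, nor that intermediate matrices stay invertible (your $\gamma\to 0$ suggestion does not by itself rule out $f=\infty$ at an intermediate step).

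The paper sidesteps all of this with a global extremal argument rather than a local swap. Using Claim~\ref{claim2} to pin down $tr(\calA'\calA'^T) = k + k\eta^2$ whenever $\calD\subset\calA'$, it upper-bounds $f(\calA')$ by the value under the \emph{most skewed} eigenvalue distribution consistent with that trace (one eigenvalue $k+\eta^2$, the rest $\eta^2$), and lower-bounds $f(\calA)$ by the value under the \emph{uniform} distribution consistent with the strictly smaller trace $tr(\calA\calA^T)\le k + k\eta^2 + 3 - \eta^2$. It then chooses $\eta$ large enough (e.g.\ $\eta^2 \ge k+3$) so that even the worst $\calA'$ beats the best $\calA$. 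This needs no rank-one bookkeeping and no invertibility analysis of intermediate swaps; the price is that the argument only works once $\eta$ is fixed above an explicit threshold, which is acceptable since $\eta$ is a free parameter of the reduction.
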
 
 
\begin{proof} 
By Claim \ref{claim:equal_trace},  $tr(\calA'\calA'^T) = k + k\eta^2$. By assumption, $\calA$ has at least 1 fewer dummy vectors than $\calA'$ and correspondingly more set vectors than $\calA'$. Since each set vector has exactly 3 ones, we have $tr(\calA\calA^T) \leq k + k\eta^2 + 3 - \eta^2$ for $\eta^2 > 3$. Let us consider the way the trace is distributed among the eigenvalues. The distribution giving the least $f(.)$ is the uniform distribution. For $\calA\calA^T$, this is $\lambda_1 = \lambda_2 = ... = \lambda_k = tr(\calA \calA^T)/k$. The distribution yielding the maximum $f(.)$ is the one that is most skewed. For $\calA'\calA'^T$, this happens when there are two distinct eigenvalues, namely $\eta^2$ with multiplicity $(k-1)$ and $k+\eta^2$ with multiplicity $1$. This is because, the smallest possible eigenvalue is $\eta^2$ and the trace must be accounted for.\footnote{Such extreme skew will not arise in reality since this corresponds to all $q$ set vectors of $\calA$ being identical (!), but this serves to prove our result.}

We next show that the largest possible value of $f(\calA')$ is strictly smaller than the smallest possible value of $f(\calA)$, from which the claim will follow. 
 
Under the skewed distribution of eigenvalues of $\calA'\calA'^T$ assumed above, $f(\calA') \le \frac{k-1}{\eta^2} + \frac{1}{k + \eta^2}$. Similarly, for the uniform distribution for the eigenvalues of $\calA\calA^T$ assumed  above, $f(\calA) \geq k \times k/tr(\calA \calA^T) \geq \frac{k^2}{k + k\eta^2 + 3 - \eta^2}$.

Set $\eta$ to be any value $\ge \sqrt{(k+3)}$. Then we have 

\begin{align} 
f(\calA') & \le \frac{(k-1)}{(k+3)} + \frac{1}{(2k+3)} \nonumber \\ 
& = \frac{2k(k+1)}{(k+3)(2k+3)} \nonumber \\ 
f(\calA) & \ge \frac{k^2}{(k+k(k+3)+3 -k -3)} \nonumber \\ 
& = \frac{k}{(k+3)}. \nonumber \\ 
\end{align}
Now, $2(k+1) < (2k+3)$. Multiplying both sides by $k(k+3)$ and rearranging, we get the desired inequality $f(\calA') \le \frac{2k(k+1)}{(k+3)(2k+3)} < \frac{k}{(k+3)} \le f(\calA)$, showing the claim. We can obtain a tighter bound on $\eta$ by solving $\frac{k-1}{\eta^2} + \frac{1}{k + \eta^2} \leq \frac{k^2}{k + k\eta^2 + 3 - \eta^2}$, which gives us $\eta^2 \geq \frac{1}{2}[\sqrt{5k^2 + 4} - k + 4]$.
\end{proof}
 
In view of this, in order to find $\calB \subset\calW$ with $|\calB|=k+q$ that minimizes $f(\calB)$, we can restrict attention to those sets of vectors $\calB$ which include all the $k$ dummy vectors. 

Consider $\calB\subset\calW$, with $|\calB|=k+q$ that includes all $k$ dummy vectors. We will show in the next two claims that the trace $tr(\calB\calB^T)=k+k\eta^2$ will be evenly split among its eigenvalues iff $\calB-\calD$ encodes an exact 3-cover of $X$. We will finally show that it is the even split that leads to minimum $f(\calB)$.  
  
\begin{claim} 
\label{claim4} 
\label{claim:unequal_eigenval} 
Consider a set $\calB$, with $|\calB| = k + q$, such that $\calB$ includes all the $k$ dummy vectors. Suppose the rank $q$ matrix $\calB' = \calB - \calD$ 
does not correspond to an exact 3-cover of $X$. Then $\calB' \calB'^T$ has $q$ non-zero eigenvalues, at least two of which are distinct.  
\end{claim}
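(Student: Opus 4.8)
The plan is to prove the claim by contradiction: assuming $\calB$ contains all $k$ dummy vectors and $\calB' = \calB - \calD$ has full column rank $q$, I will show that if \emph{all} the non-zero eigenvalues of $\calB'\calB'^T$ coincide, then $\calB'$ must encode an exact $3$-cover of $X$; contraposing this gives the statement.

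First, the dimension bookkeeping. Each set vector has length $k$, and $\calB'$ collects exactly $q$ of them, so $\calB'$ is a $k\times q$ matrix, which by hypothesis has rank $q$. Hence $\calB'\calB'^T$ is $k\times k$, positive semidefinite, of rank $q$, so it has exactly $q$ positive eigenvalues and $k-q$ zero eigenvalues --- this already gives the "$q$ non-zero eigenvalues'' part. Moreover the $q$ non-zero eigenvalues of $\calB'\calB'^T$ are precisely the eigenvalues of the $q\times q$ positive definite matrix $\calB'^T\calB'$ (the non-zero spectra of $MM^T$ and $M^TM$ agree, with multiplicity), so it suffices to show that $\calB'^T\calB'$ is not a scalar matrix.

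Suppose for contradiction that all $q$ eigenvalues of $\calB'^T\calB'$ equal some $\mu>0$. A symmetric matrix whose only eigenvalue is $\mu$ equals $\mu I_q$, so $\calB'^T\calB' = \mu I_q$. But the $i$-th diagonal entry of $\calB'^T\calB'$ is the squared Euclidean norm of the $i$-th set vector, which is $3$ because every set vector has exactly three $1$'s; hence $\mu = 3$ and all off-diagonal entries of $\calB'^T\calB'$ vanish, i.e., the $q$ set vectors in $\calB'$ are pairwise orthogonal. Two $0/1$ vectors of Hamming weight $3$ are orthogonal iff their supports are disjoint, so the $q$ corresponding triples $S_j$ are pairwise disjoint; their union is then a subset of $X$ of size $3q = k = |X|$, hence equals $X$, and each element of $X$ is covered exactly once. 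Thus $\calB'$ encodes an exact $3$-cover, contradicting the hypothesis.

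The degenerate case $q = 1$ never arises under the hypothesis, since a single weight-$3$ subset of the $3$-element set $X$ is automatically an exact cover, so in fact $q\ge 2$ and "at least two distinct'' is non-vacuous. I do not anticipate a serious obstacle: the only points needing care are invoking the full-column-rank hypothesis to pin down the number of non-zero eigenvalues and the matching of the non-zero spectra of $\calB'\calB'^T$ and $\calB'^T\calB'$; the heart of the argument --- the chain "equal non-zero eigenvalues $\Rightarrow$ scalar Gram matrix $\Rightarrow$ pairwise-orthogonal weight-$3$ vectors $\Rightarrow$ disjoint triples $\Rightarrow$ exact cover'' --- is elementary.
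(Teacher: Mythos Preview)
Your proof is correct and takes a genuinely different route from the paper's. The paper argues directly on the $k\times k$ matrix $\calB'\calB'^T$: it interprets this matrix as a weighted adjacency matrix, observes that a failure of exact cover forces at least one isolated node, counts connected components to conclude by pigeonhole that some component carries at least two of the $q$ non-zero eigenvalues, and then invokes a Perron--Frobenius--type fact (the largest eigenvalue of a connected nonnegative matrix is simple) to separate them. Your argument instead passes to the $q\times q$ Gram matrix $\calB'^T\calB'$ and proves the contrapositive: if all non-zero eigenvalues coincide, the Gram matrix is scalar, hence $3I_q$ by reading off the diagonal, hence the columns are pairwise orthogonal weight-$3$ binary vectors, hence the triples are disjoint and cover $X$ exactly. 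Your approach is shorter and more elementary; in particular it avoids the graph-theoretic detour and the appeal to simplicity of the Perron root, relying only on the standard facts that $MM^T$ and $M^TM$ share non-zero spectrum and that a symmetric matrix with a single eigenvalue is a scalar multiple of the identity.
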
 
 
\begin{proof} 
The $q$ column vectors in $\calB'$ are linearly independent, so $rank(\calB') = rank(\calB'\calB'^T) = q$. Since $\calB'\calB'^T$ is square, it has $q$ non-zero eigenvalues.  
It is sufficient to show that at least two of those eigenvalues, say $\lambda_1$ and $\lambda_2$, are unequal.  
\eat{We assume that $\calB' \calB'^T$ has distinct columns, since otherwise, it merely corresponds to the same set being chosen multiple times.}  
As $\calB'$ does not correspond to an exact 3-cover, at least one row has more than one 1, and so at least one row is all 0's. The corresponding row and column in $\calB' \calB'^T$ will also be all 0's. 
 
Define the weighted graph induced by $\calB' \calB'^T$ as $G = (V, E, w)$ such that $|V| = k$,  
\eat{$(\calB' \calB'^T)_{ij} = \calB'_{*i} \cdot \calB'^T_{*j} \geq 1 \Rightarrow e = (i, j) \in E$,}  
$w(i,j) = (\calB' \calB'^T)_{ij}, \forall i, j \in [k]$. The all-zero rows correspond to isolated nodes. We know that the eigenvalues of the the matrix $\calB'\calB'^T$ are identical to those of the induced graph $G$, which in turn are the same as those of the connected components of $G$. Consider a non-isolated node $i$. Since each row of $\calB'$ is non-orthogonal to at least two other rows, it follows that $(\calB' \calB'^T)_{ij} \geq 1$ for at least 2 values of $j \neq i$. Thus, each non-isolated node is part of a connected component of size $\geq 3$ and since there are isolated nodes, the number of (non-isolated) components is $<q$. Thus, the $q$ non-zero eigenvalues of $G$ are divided among the $<q$ components of $G$.  

By the pigeonhole principle, there is at least one connected component with $\geq 2$ eigenvalues, call them $\lambda_1, \lambda_2$, say $\lambda_1\ge \lambda_2$. We know that a component's largest eigenvalue has multiplicity $1$, from which it follows that $\lambda_1\ne \lambda_2$, as was to be shown.  
\end{proof}

We next establish two helper lemmas, where $\calM$ denotes a $k\times k$ symmetric matrix. 

\begin{lemma} 
\label{lemma:1} 
Let $\calM$ be a positive semidefinite matrix \cite{horn2012matrix} of rank $q$. Suppose that it can be expressed as a sum of rank one matrices, i.e., $\calM = \sum_{i=1}^q {\bf a_i \cdot a_i}^T$, where $\bf a_i$ is a column vector, and $\forall i, j \in [k], i \neq j, {\bf a_i \cdot a_j}^T = 0, \mbox{ and } {\bf a_i \cdot a_i}^T = s$  . 
Then the $q$  eigenvalues of $\calM$ are identical and equal to $s$. 
\end{lemma}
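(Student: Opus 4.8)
The plan is to exhibit $q$ orthonormal eigenvectors of $\calM$, all with eigenvalue $s$, and then argue that because $\calM$ has rank $q$ these account for \emph{all} of its nonzero eigenvalues. Throughout I read the hypotheses in the natural way: the column vectors $\mathbf{a}_1,\dots,\mathbf{a}_q$ are pairwise orthogonal (i.e.\ $\mathbf{a}_i^T\mathbf{a}_j = 0$ for $i\neq j$) and all have the same squared norm $\mathbf{a}_i^T\mathbf{a}_i = s$.

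First I would dispose of the degenerate case: if $s = 0$ then every $\mathbf{a}_i$ is the zero vector, so $\calM = \sum_{i=1}^q \mathbf{a}_i\mathbf{a}_i^T$ is the zero matrix, contradicting $\operatorname{rank}(\calM) = q \ge 1$. Hence $s > 0$. Pairwise orthogonal nonzero vectors are linearly independent, so the normalized vectors $\mathbf{e}_i := \mathbf{a}_i/\sqrt{s}$ form an orthonormal system of size $q$.

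Next I would compute the action of $\calM$ on each $\mathbf{e}_j$ directly from the rank-one decomposition: $\calM\,\mathbf{e}_j = \sum_{i=1}^q \mathbf{a}_i(\mathbf{a}_i^T\mathbf{e}_j)$. By orthogonality, $\mathbf{a}_i^T\mathbf{e}_j = 0$ for $i \neq j$, while $\mathbf{a}_j^T\mathbf{e}_j = \mathbf{a}_j^T\mathbf{a}_j/\sqrt{s} = \sqrt{s}$; therefore $\calM\,\mathbf{e}_j = \sqrt{s}\,\mathbf{a}_j = s\,\mathbf{e}_j$. So each $\mathbf{e}_j$ is an eigenvector of $\calM$ with eigenvalue $s$, and $\operatorname{span}\{\mathbf{e}_1,\dots,\mathbf{e}_q\}$ is a $q$-dimensional subspace contained in the $s$-eigenspace of $\calM$.

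Finally, since $\calM$ is symmetric and positive semidefinite of rank $q$, it has exactly $q$ nonzero eigenvalues counted with multiplicity, and they are all positive. We have just produced $q$ linearly independent eigenvectors whose common eigenvalue $s$ is positive, so these necessarily span the orthogonal complement of $\ker(\calM)$; equivalently, all $q$ nonzero eigenvalues of $\calM$ equal $s$, which is the claim. The argument is essentially immediate once the normalization is set up; the only points deserving a sentence of care are ruling out $s = 0$ and observing that pairwise orthogonality forces the $\mathbf{a}_i$ to be independent, so that the $s$-eigenspace is genuinely $q$-dimensional and hence exhausts the nonzero spectrum.
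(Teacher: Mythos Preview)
Your proof is correct and takes essentially the same approach as the paper: both normalize the $\mathbf{a}_i$ to $\mathbf{a}_i/\sqrt{s}$ and recognize these as an orthonormal eigenbasis for the nonzero spectrum with common eigenvalue $s$. The paper phrases this as ``compare with the spectral decomposition,'' while you verify the eigenvector equation $\calM\mathbf{e}_j = s\,\mathbf{e}_j$ directly and count dimensions; your version is slightly more explicit (in particular you rule out $s=0$ and justify that the $q$ eigenvectors exhaust the nonzero spectrum), but the underlying idea is identical.
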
 
 
\begin{proof} 
The spectral decomposition of a rank $q$  matrix $\calM$ is given as 
\begin{align} 
\calM  = \sum_{i=1}^{q} \lambda_i \bf{u_i u_i}^T \label{eq:svd} 
\end{align} 
 
where $\lambda_i$ are eigenvalues and $\bf u_i$ are orthonormal vectors. From the hypothesis of the lemma, we have $\frac{1}{s} \cdot \bf{a_i \cdot a_i}^T = 1$. 
\begin{align} 
\calM & = \sum_{i=1}^q {\bf a_i a_i}^T = \sum_{i=1}^q s \times \frac{\bf a_i}{\sqrt{s}} \frac{\bf a_i}{\sqrt{s}}^T 
\end{align} 
where $\frac{\bf a_i}{\sqrt{s}}$ are orthonormal. Comparing this with Eq.  \ref{eq:svd}, the eigenvalues of $\calM$ are $\lambda_1 = \lambda_2 = ... = \lambda_q = s$. 
\end{proof}
 
\begin{lemma} 
\label{lemma:2} 
Let $\calM$ be a symmetric rank $k$ matrix and suppose that it can be decomposed into $\sum_{i=1}^q {\bf a_i a_i}^T + \kappa\cdot I$, for some  constant $\kappa$. Then it has $(k-q)$ eigenvalues equal to $\kappa$. 
\end{lemma}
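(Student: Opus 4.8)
The plan is to observe that $\calM$ is just a constant diagonal shift of the positive semidefinite matrix $N := \sum_{i=1}^q {\bf a_i a_i}^T$, and that a shift of the form $+\kappa I$ translates the whole spectrum by $\kappa$ without disturbing eigenvectors; the null space of $N$, which has dimension $k-q$, then accounts precisely for the eigenvalue $\kappa$.

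\textbf{Step 1.} I would first record that $N = \sum_{i=1}^q {\bf a_i a_i}^T$ is symmetric and positive semidefinite, being a sum of rank-one PSD matrices. In the setting where this lemma gets invoked, the vectors ${\bf a_i}$ are linearly independent (cf.\ the matrix $\calB'$ of Claim~\ref{claim4}, whose $q$ columns are independent), so $\mathrm{rank}(N) = q$ and hence $\ker(N)$ has dimension exactly $k-q$. By the spectral theorem, write $N = \sum_{i=1}^k \mu_i\, {\bf u_i u_i}^T$ with $\{{\bf u_i}\}$ an orthonormal basis and $\mu_i \ge 0$, where exactly $k-q$ of the $\mu_i$ vanish; reorder so that $\mu_{q+1} = \cdots = \mu_k = 0$.

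\textbf{Step 2.} Since $\calM = N + \kappa I$ and $I$ commutes with everything, $\calM {\bf u_i} = N {\bf u_i} + \kappa {\bf u_i} = (\mu_i + \kappa){\bf u_i}$, so the same orthonormal basis diagonalizes $\calM$ and its eigenvalues are $\mu_i + \kappa$, $i \in [k]$. For each of the $k-q$ indices $i > q$ this yields an eigenvalue equal to $\kappa$, proving the lemma. (This is moreover consistent with the hypothesis $\mathrm{rank}(\calM) = k$: full rank forces $\mu_i + \kappa \neq 0$ for all $i$, which for $i > q$ just says $\kappa \neq 0$.)

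There is essentially no genuine obstacle here. The only point requiring a word of care is the claim $\mathrm{rank}(N) = q$, i.e.\ that the ${\bf a_i}$ are linearly independent so that $\ker(N)$ is no larger than $k-q$ — this holds in our application — after which the result is just the spectral theorem together with the elementary fact that adding $\kappa I$ shifts every eigenvalue by $\kappa$ while keeping the eigenvectors fixed.
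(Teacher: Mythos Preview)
Your proof is correct and follows essentially the same approach as the paper's: both argue that $N=\sum_i {\bf a_i a_i}^T$ has rank $q$ and hence $k-q$ zero eigenvalues, and that adding $\kappa I$ shifts every eigenvalue by $\kappa$ while preserving eigenvectors, so $\calM$ inherits $k-q$ eigenvalues equal to $\kappa$. Your write-up is in fact cleaner than the paper's (which has a small slip, writing $(\calM+\kappa I)v$ where $\calM v$ is intended), and you are right to flag that $\mathrm{rank}(N)=q$ relies on the linear independence of the ${\bf a_i}$, which holds in the intended application.
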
 
 
\begin{proof} 
Let the eigenvalues of $\calM$ be  $\lambda_1, \lambda_2, ..., \lambda_k$. Let $\lambda$ any eigenvalue of $\calM$, and $v$ the corresponding eigenvector. Then we have $(\calM+\kappa I)v =  (\sum_{i=1}^q {\bf a_ia_i^T} +\kappa I)v = (\lambda + \kappa)v$.  Since $\sum_{i=1}^q {\bf a_i a_i}^T$ results in a rank $q$ symmetric matrix, it has $q$ non-zero eigenvalues. Adding $\kappa$ to all of them, we get, $\lambda_{q+1} = ... = \lambda_k = \kappa$. 
\end{proof}

\noindent 
{\bf Proof of Claim~\ref{claim1}:} 
Consider any set of vectors $\calB \subset \calW$: $|\calB| = k+q$. By Claim~\ref{claim3}, we may assume w.l.o.g. that $\calB$ includes all $k$ dummy vectors. Suppose $\calB' := \calB-\calD$ encodes an exact 3-cover of $X$. Then $\calB\calB^T$ can be decomposed into the sum of $q$ rank one matrices and a diagonal matrix: $\calB\calB^T = \sum_{j=1}^q {\bf b_j\cdot b_j^T} + \eta^2 I$. Here ${\bf b_i}$ refers to the $i$th column of $\calB$, which is a set vector.  Since $\calB'$ is an exact 3-cover, we further have that ${\bf b_i\cdot b_i^T} = 3$, $i\in [q]$, and ${\bf b_i\cdot b_j^T} = 0$, $i\neq j$. By Lemma~\ref{lemma:1}, since $\calB'\calB'^T$ is also a positive semidefinite matrix of rank $q$, we have $\lambdabprime_1 = \cdots \lambdabprime_q = 3$, where $\lambdabprime_i$ are the eigevalues of $\calB'\calB'^T$. The corresponding $q$ eigenvalues of $\calB\calB^T$ are all $\eta^2 + 3$. Furthermore, by Lemma~\ref{lemma:2}, the remaining $k-q$ eigenvalues of $\calB\calB^T$ are all equal to $\eta^2$. That is, the eigenvalues of $\calB\calB^T$ are $\lambdab_1 = \cdots = \lambdab_q = \eta^2 + 3$ and $\lambdab_{q+1} = \cdots = \lambdab_k = \eta^2$. For this $\calB$, $f(\calB) = tr((\calB\calB^T)^{-1}) = \frac{q}{\eta^2+3} + \frac{k-q}{\eta^2} = \theta$ (see Eq.~\ref{eq-theta}). 

Now, consider a set of vectors $\calA \subset \calW$, with  $|\calA| = k+q$, such that that $\calA$ includes all $k$ dummy vectors. Suppose $\calA' := \calA-\calD$ does not correspond to an exact 3-cover of $X$. Notice that $\calA$ is a symmetric rank $k$ matrix which can be decomposed into $\calA = \sum_{j=1}^q {\bf a_i\cdot a_i^T} + \eta^2 I$, so $\lambdaa_{q+1} = \cdots = \lambdaa_k = \eta^2$, where $\lambdaa_i$, $i\in[q+1,k]$, are $k-q$ of the eigenvalues of $\calA\calA^T$. Since both $\calB$ and $\calA$ include all $k$ dummy vectors and $q$ of the set vectors, by Claim~\ref{claim2}, $tr(\calB\calB^T) = tr(\calA\calA^T) = k + k\eta^2$. We have 
$\sum_{j=q+1}^q \lambdab_j = (k-q) \eta^2 = \sum_{j=q+1}^k \lambdaa_j$ and so $\sum_{j=1}^q \lambdaa_j = \sum_{j=1}^q \lambdab_j = q(\eta^2+3)$. Now, $f(\calB) = \sum_{j=1}^k \frac{1}{\lambdab_j} = \frac{q}{\eta^2+3} + \frac{k-q}{\eta^2}$, whereas $f(\calA) = \sum_{j=1}^k \frac{1}{\lambdab_j} = \sum_{j=1}^q \frac{1}{\lambdaa_j} + \frac{k-q}{\eta^2}$. Thus, to show that $f(\calB) < f(\calA)$, it suffices to show that $\frac{q}{\eta^2 +3} < \sum_{j=1}^q \frac{1}{\lambdaa_j}$. $\mbox{LHS } = q\times \frac{1}{AM(\lambdab_1, ..., \lambdab_q)} = q\times \frac{1}{AM(\lambdaa_1, ..., \lambdaa_q)}$, where $AM(.)$ denotes the arithmetic mean. $\mbox{RHS } = q\times \frac{1}{HM(\lambdaa_1, ..., \lambdaa_q)}$, where $HM(.)$ denotes the harmonic mean. It is well known that $AM(.) \ge HM(.)$ for a given collection of positive real numbers and the equality holds iff all numbers in the collection are identical. On the other hand, we know that since $\calA'$ does not correspond to an exact 3-cover of $X$, by Claim~\ref{claim4}, not all eigenvalues of $\calA'$ are equal, from which it follows that $LHS < RHS$, completing the proof of Claim~\ref{claim1} as also Theorem~\ref{thm-oid}. $\qed$ \\

We next establish an inapproximability result for OID. 

\eat{
\begin{theorem} 
\label{thm-m3c-approx}  
The Maximum q-Cover by 3-Sets problem is NP-hard and APX-Complete.  
\end{theorem}

\begin{proof}
Suppose not. Then it means that a PTAS exists for M3C. We can use that to solve the $k$-Set Cover problem for $k = 3$ in the following manner: For $ q= 1, ..., |S|$, run the PTAS for M3C until $\calC = |U|$. This gives us a PTAS for the $k$-Set Cover problem. However, it is APX-Complete; such a PTAS cannot exist unless P = NP. Hence the PTAS for M3C must also not exist, which implies that it is also APX-Complete.
\end{proof}
Next we use a gap-preserving reduction from M3C, to prove the following theorem. 

}

\subsection{Hardness of Approximation}
\label{sec:appx}
\begin{theorem} 
\label{thm-oid-approx}  
It is NP-hard to approximate the OID problem (Problem ~\ref{probnew-oid})  within a factor less than $\frac{\alpha}{\theta}$, where $\alpha = \theta + \frac{2}{(2 + \eta^2)(4 + \eta^2)(3 + \eta^2)}$.  
\end{theorem}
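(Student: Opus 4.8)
The plan is to reuse, almost verbatim, the reduction from Exact Cover by 3‑Sets built for Theorem~\ref{thm-oid}: the same transformed matrix $\calW$ with the same choice of $\eta$, the same budget $b=q+k$, and unit item variances. By Claim~\ref{claim1}, a YES instance of X3C yields an OID instance (Problem~\ref{probnew-oid}) whose optimum equals $\theta$ (Eq.~\ref{eq-theta}), whereas a NO instance yields an optimum strictly larger than $\theta$. The only thing that needs strengthening is to turn ``strictly larger'' into the quantitative bound ``at least $\alpha$'', where $\alpha=\theta+\frac{2}{(2+\eta^2)(4+\eta^2)(3+\eta^2)}>\theta$. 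Granting that, any polynomial‑time $\rho$‑approximation with $\rho<\alpha/\theta$, run on the reduced instance, returns a set $B$ with $f(\calB)\le\rho\,\theta<\alpha$ on YES instances and $f(\calB)\ge\alpha$ on NO instances, hence decides X3C; so no such algorithm exists unless $\mathrm{P}=\mathrm{NP}$.

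It remains to lower‑bound the OID optimum on NO instances. By Claim~\ref{claim3} we may assume the optimal item set $B$ contains all $k$ dummy items; let $\calB'=\calB\setminus\calD$ be the $k\times q$ matrix of the $q$ chosen set vectors. As in the proof of Claim~\ref{claim1}, $\calB\calB^T=\calB'\calB'^T+\eta^2 I$, and the nonzero eigenvalues of $\calB'\calB'^T$ are those of the Gram matrix $G:=\calB'^T\calB'=3I+N$, where $N$ is the $q\times q$ symmetric matrix with zero diagonal and $N_{ab}=|S_a\cap S_b|$. Writing $\mu_1,\dots,\mu_q$ for the eigenvalues of $N$ (all $\ge -3$, since $G$ is positive semidefinite) and $c:=3+\eta^2$, the eigenvalues of $\calB\calB^T$ are $c+\mu_1,\dots,c+\mu_q$ together with $\eta^2$ of multiplicity $k-q$, so
\[
f(\calB)=\sum_{i=1}^{q}\frac{1}{c+\mu_i}+\frac{k-q}{\eta^2}.
\]
Since $\mathrm{tr}(N)=\sum_i\mu_i=0$, a short manipulation (using $\tfrac{1}{c+\mu}-\tfrac1c=-\tfrac{\mu}{c(c+\mu)}$ and $\tfrac{\mu}{c+\mu}=\tfrac\mu c-\tfrac{\mu^2}{c(c+\mu)}$) gives $f(\calB)-\theta=\frac{1}{c^2}\sum_{i=1}^{q}\frac{\mu_i^2}{c+\mu_i}$.

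The heart of the argument is the estimate $\sum_{i=1}^{q}\frac{\mu_i^2}{c+\mu_i}\ge\frac{2c}{c^2-1}$ on NO instances. There, the $q$ chosen sets do not partition $X$; since there are $3q$ incidences over $|X|=3q$ elements, some element lies in two chosen sets, so $N$ has an entry $s:=N_{ab}\ge 1$. Cauchy interlacing applied to the principal $2\times2$ submatrix of $N$ supported on rows/columns $a,b$ — which has zero diagonal and off‑diagonal entry $s$, hence eigenvalues $\pm s$ — forces $\mu_{\max}(N)\ge s\ge 1$ and $\mu_{\min}(N)\le -s\le -1$. The function $\psi(t):=\frac{t^2}{c+t}$ on $(-c,\infty)$ has $\psi'(t)=\frac{t(t+2c)}{(c+t)^2}$, so it is nonnegative, strictly decreasing on $(-c,0]$ and strictly increasing on $[0,\infty)$; as $-3\le\mu_{\min}\le-1$ and $\mu_{\max}\ge 1$ lie in these monotone ranges, $\sum_i\psi(\mu_i)\ge\psi(\mu_{\max})+\psi(\mu_{\min})\ge\psi(1)+\psi(-1)=\frac{1}{c+1}+\frac{1}{c-1}=\frac{2c}{c^2-1}$. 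Substituting back, $f(\calB)-\theta\ge\frac{2}{c(c^2-1)}=\frac{2}{(2+\eta^2)(4+\eta^2)(3+\eta^2)}=\alpha-\theta$, as required. (This simultaneously handles a rank‑deficient $\calB'$: then $\mu_{\min}(N)=-3$ and the same chain goes through, using $\psi(-3)\ge\psi(-1)$.)

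I expect the main obstacle to be precisely this quantitative gap. The AM--HM inequality used in Claim~\ref{claim1} yields only a strict, non‑quantitative separation, and bluntly lower‑bounding $\sum 1/\lambda_i$ discards too much; the new ingredient is to isolate the extremal ``single overlapping pair'' configuration, extract from it a well‑separated eigenvalue pair $\mu_{\max}\ge 1$, $\mu_{\min}\le-1$ via Cauchy interlacing, and then exploit the shape of $\psi$ — this is what pins the constant down to exactly $\frac{2}{c(c^2-1)}$. A minor bookkeeping point is to keep the reduction consistent: one should use the same $\eta$ fixed in Theorem~\ref{thm-oid} so that Claims~\ref{claim1} and~\ref{claim3} apply unchanged, and dispose of the trivial X3C instances (e.g.\ $q\le 1$, which are always YES) separately.
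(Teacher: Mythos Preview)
Your proof is correct and takes a genuinely different, cleaner route than the paper's. The paper introduces an auxiliary problem (Max $q$-Cover by 3-Sets), argues via a loosely justified graph-theoretic claim that the minimum $f(\cdot)$ over NO instances is attained when the cover value is $3q-1$, explicitly identifies that configuration ($q-2$ disjoint sets plus two sets sharing exactly one element), and computes its eigenvalues $\{3,\dots,3,2,4\}$ to obtain $\alpha$. Your argument bypasses this extremal search entirely: passing to the Gram matrix $G=\calB'^T\calB'=3I+N$, you exploit $\mathrm{tr}(N)=0$ to get the exact identity $f(\calB)-\theta=\tfrac{1}{c^2}\sum_i\psi(\mu_i)$ with $\psi(t)=t^2/(c+t)$, then Cauchy interlacing on the $2\times 2$ principal block forces $\mu_{\max}\ge 1$, $\mu_{\min}\le -1$ in any NO instance, and the monotonicity of $\psi$ on each side of $0$ yields the tight bound $\psi(1)+\psi(-1)=\tfrac{2c}{c^2-1}$. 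This is more rigorous than the paper's cover-monotonicity claim (whose proof is heuristic), handles all NO configurations uniformly including rank-deficient $\calB'$, and makes transparent why the ``single overlap'' case is extremal: it is exactly the equality case $\mu=\pm 1$ of your interlacing bound. The trade-off is that the paper's argument, while shakier, is more combinatorially intuitive; yours is shorter and self-contained once the eigenvalue identity is in hand.
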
 


First we define a variant of the X3C problem 
which we refer to as Max $q$-Cover by 3-Sets (M3C), which will be convenient in our proof. 

\begin{definition}
Given a number $q$ and a collection of sets $\sS = \{S_1, S_2, ..., S_n\}$, each of size 3, is there a subset $\sS^*$ of $\sS$ such that the cover $\calC = |\bigcup_{s \in \sS^*} s| = 3q$ and $| \sS^*| \leq q$?
\end{definition}

Since each set has 3 elements, with $| \sS^*| \leq q$, we get $\calC = 3q$ if and only if $| \sS^*|$ is an exact cover. Thus X3C can be reduced to M3C, making M3C NP-hard. 

We convert an instance $x$ of M3C to an instance of OID, $h(x)$, in the same way as described in the NP-Hardness proof: let the set of items be $\calI = \{a_1, ..., a_n, d_1, ..., d_k\}$, where $k=3q$, item $a_j$  corresponds to set $S_j$,  $j\in[n]$, and $d_j$ are dummy items,  $j\in[k]$. Let the dummy vectors be defined as above, and $b := q + k$. As shown previously in Claim \ref{claim3}, we need to only consider those sets of vectors $\calB$ that have all $k$ dummy vectors. Similarly, we can transform a solution $y$ of OID, back to a solution of M3C, $g(y)$, in the following manner: discard the chosen dummy vectors, and take the sets corresponding to the $q$ set vectors.

As a YES instances of M3C corresponds to a YES instances of X3C, an instance $x$ with $\calC = 3q$ corresponds to $f(\calB) = \theta$. 

For  the NO instances of M3C, $\calC \leq 3q-1$ (by the definition). Unfortunately, a similar one-to-one mapping does not exist in such cases: with the same $\calC$, there could be multiple instances of M3C that correspond to different instances of OID and correspondingly $f(\calB)$. From Theorem \ref{thm-oid}, we know that it is NP-hard to determine whether $f(\calB) \leq \theta$ for a given instance of OID -- $h(x)$.

To find the lowest $f(\calB)$ of a NO instance of OID, we first use an intermediate result that shows that among the set of different $f(\calB)$ values giving the same cover value $\calC$, the lowest possible $f(.)$ value increases as $\calC$ decreases.

\begin{claim}
As the cover value increases,  the best (i.e., lowest) f(.) value among all the solutions with the same cover value decreases. 
\end{claim}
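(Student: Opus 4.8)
The plan is to reduce the claim to a statement purely about eigenvalue distributions, using the structural decomposition already established in the proof of Claim~\ref{claim1}. As before, by Claim~\ref{claim3}, we may restrict attention to solutions $\calB$ that contain all $k$ dummy vectors, so that $\calB\calB^T = \calB'\calB'^T + \eta^2 I$ where $\calB' = \calB - \calD$ is the $k\times q$ binary matrix of chosen set vectors. Then $\calB'\calB'^T$ is a positive semidefinite matrix whose rank equals the cover value's row count: if the chosen $q$ sets cover exactly $\calC$ distinct elements of $X$, then $\calB'\calB'^T$ has exactly $\calC$ nonzero eigenvalues (one per covered element, since each covered row of $\calB'$ is nonzero) and $k - \calC$ zero eigenvalues. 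Consequently the eigenvalues of $\calB\calB^T$ are $\mu_1,\dots,\mu_{\calC}$ (each $\ge \eta^2$, equal to $\eta^2$ plus a nonzero eigenvalue of $\calB'\calB'^T$) together with $k - \calC$ copies of $\eta^2$, and
\begin{align}
f(\calB) = \sum_{i=1}^{\calC}\frac{1}{\mu_i} + \frac{k-\calC}{\eta^2}. \nonumber
\end{align}
Moreover, by Claim~\ref{claim2}, since all $q$ set vectors are binary with exactly three $1$'s, $\sum_{i=1}^{\calC}\mu_i = \mathrm{tr}(\calB'\calB'^T) + \calC\eta^2 = 3q + \calC\eta^2$, independent of which $q$ sets are chosen.

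First I would fix the cover value to $\calC$ and minimize $\sum_{i=1}^{\calC} 1/\mu_i$ subject to $\mu_i \ge \eta^2$ and $\sum_i \mu_i = 3q + \calC\eta^2$. By convexity of $t\mapsto 1/t$ and the AM--HM inequality (exactly as used at the end of the proof of Claim~\ref{claim1}), the minimum of $\sum_i 1/\mu_i$ over this constraint set is attained at the uniform point $\mu_i = (3q + \calC\eta^2)/\calC = \eta^2 + 3q/\calC$, giving the lower bound
\begin{align}
f(\calB) \;\ge\; \frac{\calC}{\eta^2 + 3q/\calC} + \frac{k-\calC}{\eta^2} \;=\; \frac{\calC^2}{\calC\eta^2 + 3q} + \frac{k-\calC}{\eta^2} \;=:\; \phi(\calC). \nonumber
\end{align}
(When $\calC = 3q = k$ this recovers $\theta$, and the uniform point is realizable precisely by an exact cover, consistent with Claim~\ref{claim1}.) The claim then follows once I show $\phi$ is strictly increasing in $\calC$ on the relevant range $\calC \le 3q$. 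I would verify this by treating $\phi$ as a function of a real variable $\calC \in (0, 3q]$, differentiating, and checking $\phi'(\calC) > 0$; the derivative of the first term is $\frac{\calC(\calC\eta^2 + 6q)}{(\calC\eta^2+3q)^2}$ and of the second is $-1/\eta^2$, so the inequality $\phi'(\calC) > 0$ reduces to a polynomial inequality in $\calC, q, \eta^2$ that holds on $\calC \le 3q$ (using $\eta^2 > 3$ from the setup). This is the one genuinely computational step, but it is a routine single-variable estimate.

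The main obstacle, and the point requiring care, is the gap between "the best $f$ value among solutions with cover $\calC$ decreases as $\calC$ increases" as a statement about the \emph{achievable} minimum versus the \emph{lower bound} $\phi(\calC)$ I can prove. The uniform eigenvalue profile is not in general realizable for $\calC < 3q$ (it would require a very special overlap pattern among the chosen sets), so $\phi(\calC)$ is only a lower bound, not the true optimum, for sub-maximal covers; strictly speaking the claim as phrased in the text is about the true optimum. To make the argument airtight I would either (i) argue monotonicity directly: given an optimal solution with cover $\calC$, exhibit a solution with cover $\calC + 1$ and strictly smaller $f$ (e.g.\ swap one set vector for another that covers a fresh element, and show via an interlacing/rank-one-update argument that $\mathrm{tr}((\calB\calB^T)^{-1})$ strictly drops), or (ii) observe that for the inapproximability application in Theorem~\ref{thm-oid-approx} it in fact suffices to know that \emph{every} NO instance has $f(\calB) \ge \phi(3q-1) > \theta = \phi(3q)$, which is exactly the lower-bound form above and gives the claimed gap $\alpha/\theta$ after plugging $\calC = 3q-1$ into $\phi$ and comparing with $\theta$. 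I expect route (ii) to be the intended reading, so I would present the $\phi(\calC)$ lower bound together with its strict monotonicity as the substance of the claim, and note that this is precisely what downstream needs.
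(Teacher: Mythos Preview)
There is a genuine error at the eigenvalue-counting step that invalidates the remainder of the argument. You assert that if the $q$ chosen set vectors cover $\calC$ elements then $\calB'\calB'^T$ has exactly $\calC$ nonzero eigenvalues, ``one per covered element, since each covered row of $\calB'$ is nonzero.'' This conflates the number of nonzero \emph{rows} of $\calB'$ (which is indeed $\calC$) with the rank of $\calB'\calB'^T$, which equals $\mathrm{rank}(\calB')\le q$ and is exactly $q$ whenever the $q$ columns are linearly independent---precisely the hypothesis invoked in Claim~\ref{claim:unequal_eigenval}. A single column $\calB'=(1,1,1,0,\dots,0)^T$ already exposes the discrepancy: three nonzero rows, yet $\calB'\calB'^T$ has rank one. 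With the correct count the decomposition reads
\[
f(\calB)=\sum_{i=1}^{q}\frac{1}{\mu_i}+\frac{k-q}{\eta^2},\qquad \sum_{i=1}^{q}\mu_i=3q+q\eta^2,
\]
so both the number of summands and the trace constraint are \emph{independent of $\calC$}, and the AM--HM lower bound collapses to the constant $q/(3+\eta^2)+(k-q)/\eta^2=\theta$ for every cover value. A sanity check flags the same problem: your formula gives $\phi(3q)=3q/(\eta^2+1)$, whereas $\theta=q/(3+\eta^2)+2q/\eta^2$, and these disagree. (There is also a direction slip: for route~(ii) to yield $f\ge\phi(3q-1)>\theta=\phi(3q)$ you would need $\phi$ strictly \emph{decreasing}, not increasing; the derivative you wrote down is in fact negative on the relevant range.)

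Because the convexity bound alone cannot see $\calC$ once the count is corrected, some structural input tying the cover value to the eigenvalue spread is unavoidable. The paper supplies this via a graph-theoretic argument rather than AM--HM: viewing $\calB'\calB'^T$ as a weighted adjacency matrix, it observes that smaller cover forces the fixed total degree $3k$ to concentrate on fewer non-isolated nodes, driving up the average and maximum degree and hence the top eigenvalue $\lambda_1$, which skews the spectrum and increases $f$. If you want to rescue your line of attack, your route~(i)---swap in a set vector covering a fresh element and control $\mathrm{tr}((\calB\calB^T)^{-1})$ through a rank-one update---is the natural repair, but as written the eigenvalue-count error breaks both $\phi(\calC)$ and the monotonicity computation you propose for it.
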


\begin{proof}
Let $\calB' = \calB \setminus \calD$.

By interpreting $\calB' \calB'^T$ as a $(k \times k)$ adjacency matrix, the dimensions correspond to the $k$ nodes in the graph. Dimensions that are uncovered are isolated nodes, and dimensions that are covered are part of a connected component. Sum of degrees of the entire graph = $3k$ (sum of all entries in the adjacency matrix $\calB' \calB'^T$) which is a constant given $k$.

From this, given that the sum of the degrees over the graph is $3k$ (which is a constant), we argue that with more uncovered dimensions/nodes, average degree ($d_{avg}$)
(ignoring the isolated nodes) and maximum degree ($d_{max}$) increase. 
From this, it follows that each non-isolated node has degree at least $3$, hence the average degree for such nodes is greater than $3$ for any $\calB' \calB'^T$.
If there are multiple components in a given graph, considering the one with the highest average degree, $\lambda_1 \geq max(d_{avg}^{max}, \sqrt{d_{max}})$, where $d_{avg}^{max}$ is the highest average degree among all components. 

For a NO instance, the highest average degree among all connected components is greater than $3$, since the vectors must overlap at least over $1$ dimension. For a given cover $\calC$, the lowest value of $\lambda_1$ is thus lower bounded by $d_{avg} > 3$, which increases as the  overlap increases. In turn, a higher value of $\lambda_1$ makes the distribution of eigenvalues more skewed, leading to a higher $f(.)$. To have a lower $f(.)$, we must have $\lambda_1$ as close to $3$ as possible, by  decreasing $d_{avg}$ and $d_{max}$, thereby, increasing coverage. 
\end{proof}

Following this claim, among the NO instances of OID, it is sufficient to show that the lowest $f(.)$ corresponds to the highest $\calC$, where $\calC = 3q-1$. Next we calculate its corresponding $f(.)$, and moreover, show that for a NO instance with $\calC = 3q-1$, we get a unique OID solution. For this scenario, it can be shown that there are exactly $q-2$ disjoint sets, and $2$ sets cover exactly one element twice. This can only be obtained from a solution $y$ of OID, if in the given solution, $q-2$ set vectors are disjoint, and $2$ have exactly one $1$ in the same position. The following example illustrates this.

\begin{example}\label{ex1} 
For an instance with $q=3$, a solution with exactly two vectors overlapping on one dimension could look like 

\[
   \calB'^T=
   \begin{bmatrix}
   1 & 1 & 1 & 0 & 0 & 0 & 0 & 0 & 0 \\
   0 & 0 & 1 & 1 & 1 & 0 & 0 & 0 & 0 \\     
   0 & 0 & 0 & 0 & 0 & 1 & 0 & 1 & 1
	\end{bmatrix}
\]

Then 
\[
   \calB'\calB'^T=
   \begin{bmatrix}
   1 & 1 & 1 & 0 & 0 & 0 & 0 & 0 & 0 \\
   1 & 1 & 1 & 0 & 0 & 0 & 0 & 0 & 0 \\
   1 & 1 & 2 & 1 & 1 & 0 & 0 & 0 & 0 \\
   0 & 0 & 1 & 1 & 1 & 0 & 0 & 0 & 0 \\     
   0 & 0 & 1 & 1 & 1 & 0 & 0 & 0 & 0 \\
   0 & 0 & 0 & 0 & 0 & 1 & 0 & 1 & 1 \\
   0 & 0 & 0 & 0 & 0 & 1 & 0 & 1 & 1 \\
   0 & 0 & 0 & 0 & 0 & 0 & 0 & 0 & 0 \\
   0 & 0 & 0 & 0 & 0 & 1 & 0 & 1 & 1
	\end{bmatrix}
\]

Next, we show what $f(\calB)$ of such a solution would be. As before, let $\calB' := \calB \setminus \calD$. Interpreting $\calB' \calB'^T$ as the adjacency matrix of a graph $G$, we know that the eigenvalues of $\calB' \calB'^T$ are the same as those of $G$, given by the multi-set union of its components, which are: $q-2$ corresponding to the disjoint set vectors, and $1$ corresponding to the two over-lapping vectors. The first $q-2$ components each form a 3-regular graph which contributes an eigenvalue of $3$ each. It could be shown that the last one, which corresponds to the overlap, contributes to $(0, 0, 0, 0, 2, 4)$. Therefore, $f(\calB) = \alpha = \frac{2q}{\eta^2} + \frac{q-2}{3 + \eta^2} + \frac{1}{2 + \eta^2} + \frac{1}{4 + \eta^2} = \theta + \frac{2}{(2 + \eta^2)(4 + \eta^2)(3 + \eta^2)}$. \qed 
\end{example} 

It follows from our arguments, that $f(\calB) \geq \alpha$ if and only if $\calC \leq 3q-1$. 

Let $\calA$ be an approximation algorithm that approximates OID to within $c < \frac{\alpha}{\theta}$, returns a value $v$ such that $OPT_{OID}(h(x)) \leq v \leq c \times OPT_{OID}(h(x))$.

\begin{claim}
$x$ is a YES instance of M3C if and only if $\theta \leq v < \alpha$.
\end{claim}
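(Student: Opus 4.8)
The plan is to derive the claim directly from two facts already established: (i) if $x$ is a YES instance of M3C then the optimal OID value $OPT_{OID}(h(x))$ equals $\theta$; and (ii) if $x$ is a NO instance then $OPT_{OID}(h(x)) \ge \alpha$. Fact (i) is immediate from the analysis preceding Claim~\ref{claim1}: a YES instance yields an exact $3$-cover, for the corresponding $\calB$ we have $f(\calB) = \theta$, and $\theta$ is also a global lower bound since $f(\calB) > \theta$ for every non-cover $\calB$. Fact (ii) is exactly what the monotonicity claim together with Example~\ref{ex1} buys us: for a NO instance the best achievable cover is $\calC = 3q-1$, the corresponding OID solution is essentially unique, its objective value is $\alpha$, and every other NO instance has strictly larger $f(\calB)$. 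Combining, $OPT_{OID}(h(x)) \in \{\theta\} \cup [\alpha, \infty)$, with the value $\theta$ attained precisely on YES instances.

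For the forward direction, assume $x$ is a YES instance, so $OPT_{OID}(h(x)) = \theta$. The approximation algorithm $\calA$ returns $v$ with $\theta = OPT_{OID}(h(x)) \le v \le c \cdot OPT_{OID}(h(x)) = c\,\theta$. Since $c < \frac{\alpha}{\theta}$ and $\theta > 0$, this gives $v \le c\,\theta < \frac{\alpha}{\theta} \cdot \theta = \alpha$, so $\theta \le v < \alpha$, as required.

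For the converse, suppose $\theta \le v < \alpha$ and, towards a contradiction, that $x$ is a NO instance. By fact (ii), $OPT_{OID}(h(x)) \ge \alpha$, and since $v \ge OPT_{OID}(h(x))$ we get $v \ge \alpha$, contradicting $v < \alpha$. Hence $x$ is a YES instance, proving the claim. Theorem~\ref{thm-oid-approx} then follows: an approximation ratio $c < \alpha/\theta$ for OID would, via the reduction $h$ and the test ``is $v < \alpha$?'', decide M3C and hence X3C in polynomial time, which is impossible unless P $=$ NP.

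The only genuinely delicate ingredient here is fact (ii) --- that $\alpha$ is a valid lower bound on $f(\calB)$ over all NO instances --- which rests on the monotonicity of the best $f(\cdot)$-value in the cover size and on the exact eigenvalue accounting $(0,0,0,0,2,4)$ for the single overlapping component in the $\calC = 3q-1$ case. Once those are granted, the claim itself is the short deduction above, and the ``$\theta \le v$'' half is automatic since $\theta$ is a global lower bound for $OPT_{OID}$ on every reduced instance.
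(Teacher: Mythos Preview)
Your proof is correct and follows essentially the same approach as the paper's own proof: the forward direction is identical (YES $\Rightarrow OPT = \theta \Rightarrow \theta \le v \le c\theta < \alpha$), and your contrapositive argument for the converse (NO $\Rightarrow OPT \ge \alpha \Rightarrow v \ge \alpha$) is exactly what the paper does before invoking disjointness of the intervals $[\theta,\alpha)$ and $[\alpha,\infty)$. Your surrounding commentary on facts (i) and (ii) and on how Theorem~\ref{thm-oid-approx} follows is accurate and matches the paper's framing.
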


\begin{proof}
If $h(x)$ is a YES instance of OID, $OPT_{OID}(h(x)) = \theta$, so $\theta \leq v \leq c \times \theta$. Since $c < \frac{\alpha}{\theta}$, $\theta \leq v < \alpha$. If $h(x)$ is a NO instance of OID, $\alpha \leq OPT_{OID}(h(x))$, so $\alpha \leq v$. Since the intervals are disjoint, the claim follows.
\end{proof}

\noindent 
{\bf Proof of Theorem~\ref{thm-oid-approx}}:
Finally,  if such an approximation algorithm $\calA$ existed, we would be able to distinguish between the YES and NO instances of M3C in polynomial time. However as that is NP-hard, unless P = NP, $\calA$ cannot exist.

\subsection{Supermodularity and Submodularity}
\label{sec:supermodularity}
\label{sec:nosubsuper} 

If the objective function were to satisfy the nice property of submodularity or supermodularity, we could exploit it to devise some approximation algorithm. 
First we review the definitions of submodularity and supermodularity. 
\begin{definition}
For subsets $A \subset B \subset U$ of some ground set $U$, and $x \in U \setminus B$, a set function $F: 2^U \rightarrow \mathbb{R}_{\ge 0}$ is submodular if $F(B \cup \{x\}) - F(B) \leq F(A \cup \{x\}) - F(A)$. The 
function $F(.)$ is supermodular iff $-F(.)$ is submodular, or equivalently iff $F(B \cup \{x\}) - F(B) \geq F(A \cup \{x\}) - F(A)$.
\end{definition}

In \cite{anava2015budget}, for a similar objective function for the user selection problem for a cold-start item, the authors claimed that their objective function is supermodular. The following lemma shows that the objective function $f(.)$ for our OID problem is not supermodular. 

\begin{lemma}
The objective function $f(V_B C_B) = tr((V_B C_B^{-2} V_B^T)^{-1})$ of the OID problem is {\em not supermodular}.
\end{lemma}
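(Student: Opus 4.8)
The plan is to exhibit an explicit counterexample: a small instance (small latent dimension $d$ and a handful of items) together with sets $A \subset B$ and an item $x \notin B$ for which the supermodular inequality $f(B \cup \{x\}) - f(B) \ge f(A \cup \{x\}) - f(A)$ fails. Since $f$ is defined only on sets whose associated matrix $V_B C_B$ has full row rank (so that $(V_B C_B^{-2} V_B^T)^{-1}$ exists), I would work in dimension $d = 2$ and choose enough base items in $A$ to make $V_A C_A V_A^T$ already invertible; the cleanest choice is to take $C_B = I$ throughout (all item variances equal to $1$), so $f(V_B) = tr((V_B V_B^T)^{-1})$, and then pick $2$-dimensional item vectors with simple integer or small rational coordinates.

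Concretely, I would let $A$ consist of two vectors spanning $\mathbb{R}^2$, say something like $e_1$ and $e_2$ (or a near-degenerate pair to make the marginal gains more sensitive), let $B = A \cup \{y\}$ for a third vector $y$ chosen so that $B$ is "well-conditioned", and let $x$ be a fourth vector aligned with one of the weak directions. The key computational tool is the Sherman–Morrison rank-one update formula: adding item $x$ changes $M := V_S V_S^T$ to $M + x x^T$, and
\[
tr\big((M + x x^T)^{-1}\big) = tr(M^{-1}) - \frac{x^T M^{-2} x}{1 + x^T M^{-1} x}.
\]
So the marginal decrease in $f$ upon adding $x$ to a set with Gram matrix $M$ is exactly $\Delta(M) = \dfrac{x^T M^{-2} x}{1 + x^T M^{-1} x}$, and supermodularity of $f$ would require $\Delta(M_B) \le \Delta(M_A)$ whenever $M_A \preceq M_B$ (since $A \subset B$ implies $M_B = M_A + \sum_{z \in B \setminus A} z z^T \succeq M_A$). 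The whole proof reduces to finding $M_A \preceq M_B$ and a vector $x$ with $\Delta(M_B) > \Delta(M_A)$, i.e., making the marginal gain \emph{larger} at the bigger set. This is plausible because $\Delta$ is not monotone in $M$: the numerator $x^T M^{-2} x$ and denominator $1 + x^T M^{-1} x$ both shrink as $M$ grows in the Loewner order, and their ratio can go either way depending on how the growth of $M$ is distributed relative to the direction of $x$.

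The steps in order: (1) fix $d=2$, $C=I$, and the ground set of item vectors; (2) write down $M_A = V_A V_A^T$ and $M_B = V_B V_B^T$ explicitly and verify both are invertible and $M_A \preceq M_B$; (3) compute $f(A), f(B), f(A \cup \{x\}), f(B \cup \{x\})$ via the trace-of-inverse formula (or equivalently via $\Delta$ above); (4) check numerically/symbolically that $f(B \cup \{x\}) - f(B) < f(A \cup \{x\}) - f(A)$ — note the marginal gains are negative since $f$ decreases, so "fails supermodularity" means the gain at $B$ is \emph{more negative}, i.e.\ $|f(B\cup\{x\})-f(B)| > |f(A\cup\{x\})-f(A)|$ is \emph{not} what we want; we want the decrease at $B$ to be \emph{smaller} in magnitude than... wait — supermodular means $F(B\cup x)-F(B) \ge F(A\cup x)-F(A)$, so violation means $F(B\cup x)-F(B) < F(A\cup x)-F(A)$, i.e.\ the (negative) marginal at $B$ is strictly below that at $A$, i.e.\ $\Delta(M_B) > \Delta(M_A)$ — so step (4) is to verify $\Delta(M_B) > \Delta(M_A)$ for the chosen $x$; (5) conclude $f$ is not supermodular, and remark that by the user/item duality noted in the related-work section this also refutes the supermodularity claim of~\cite{anava2015budget}.

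The main obstacle is purely one of \emph{search}: guessing vectors that simultaneously satisfy the rank/invertibility constraints, keep the Loewner ordering $M_A \preceq M_B$, and produce the reversed inequality with clean enough numbers to present. I expect the right regime is to make $A$'s Gram matrix mildly ill-conditioned in some direction $w$, have $B\setminus A$ add mass \emph{along} $w$ (so that $M_B$ is better-conditioned than $M_A$ in that direction but $M_A \preceq M_B$ still holds), and take $x$ pointing in a direction orthogonal to $w$ where $M_B$ is comparatively \emph{worse-conditioned} than $M_A$ would predict — there the marginal reduction $\Delta$ is boosted at $B$. A two-dimensional diagonal example where $M_A = \mathrm{diag}(a_1, a_2)$, $M_B = \mathrm{diag}(a_1 + t, a_2)$ with $x = (0, 1)^T$ gives $\Delta(M_A) = \Delta(M_B) = \frac{1/a_2^2}{1 + 1/a_2}$ (equality, not violation), so I will need $x$ to have a component along the direction that \emph{did} grow, forcing a genuinely $2$-dimensional computation; this is the only real work, and it is finite and routine once a good candidate is in hand.
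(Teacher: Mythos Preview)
Your approach is exactly the paper's: disprove supermodularity by a single explicit counterexample. The paper does not use your Sherman--Morrison analysis or work in $d=2$; it simply writes down a $5\times 5$ binary matrix $\calM_1$, a $5\times 6$ binary matrix $\calM_2 \supset \calM_1$ (viewed as column sets), and $x = e_2$, then computes the four trace-of-inverse values numerically and checks that $f(\calM_2\cup\{x\}) - f(\calM_2) < f(\calM_1\cup\{x\}) - f(\calM_1)$.

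The only gap in your proposal is that you stop at the search strategy without actually producing the example; as written, it is a plan rather than a proof. Your $d=2$ route does succeed, but your diagnosis of the right regime is slightly off: you want the direction along which $M_B - M_A$ grows to be the \emph{already dominant} eigendirection of $M_A$ (not the weak one), with $x$ having nonzero components in both directions. In the diagonal case $M=\mathrm{diag}(m_1,m_2)$ with $x=(\alpha,\beta)$, one computes $\partial \Delta/\partial m_1 > 0$ iff $\beta^2(m_1-2m_2)/(m_1 m_2^2) > \alpha^2/m_1^2 + 2/m_1$, which forces $m_1 > 2m_2$. Concretely, taking $A=\{(\sqrt{10},0)^T,(0,1)^T\}$, $B=A\cup\{(1,0)^T\}$, and $x=(1,1)^T$ gives $f(A\cup\{x\})-f(A)=13/21-11/10=-101/210$ and $f(B\cup\{x\})-f(B)=14/23-12/11=-122/253$, and since $-122/253 < -101/210$ supermodularity fails already in two dimensions with exact rational values --- arguably cleaner than the paper's $d=5$ floating-point computation.
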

\begin{proof}
We prove the result by showing that the function $f(\calM) =  tr(\calM\calM^T)^{-1})$ is in general not supermodular. 
Consider the following matrices:  

\[
\calM_1 =
  \begin{bmatrix}
    1 & 1 & 0 & 0 & 0 \\ 
    0 & 0 & 1 & 0 & 0 \\ 
    1 & 0 & 0 & 1 & 0 \\ 
    0 & 0 & 1 & 1 & 1 \\ 
    0 & 0 & 1 & 0 & 1 
  \end{bmatrix},\quad
\calM_2 =
  \begin{bmatrix}
    0 & 0 & 1 & 0 & 1 & 1 \\ 
    1 & 0 & 0 & 0 & 0 & 1 \\ 
    0 & 1 & 0 & 0 & 1 & 1 \\ 
    1 & 1 & 0 & 1 & 0 & 0 \\ 
    1 & 0 & 0 & 1 & 0 & 1 
  \end{bmatrix}, \quad
\]
and vector 
\[
x^T=
  \begin{bmatrix}
    0 & 1 & 0 & 0 & 0
  \end{bmatrix}
\] 

Notice that $\calM_1$, viewed as a set of column vectors, is a subset of $\calM_2$, viewed as a subset of column vectors. Now, 
$f(\calM_1) = tr(\calM_1\calM_1^T) =12, f(\calM_1\cup \{x\}) = 10.333, f(\calM_2) = 6.6250, f(\calM_2 \cup \{x\}) = 4.4783$.

Clearly, $f(\calM_1 \cup \{x\}) - f(\calM_1) = 10.333 - 12 = -1.6667$ 
and $f(\calM_2 \cup \{x\}) - f(\calM_2) = 4.4783 - 6.6250 = -2.1467$, 
which violates $f(\calM_1 \cup \{x\}) - f(\calM_1) \le 
f(\calM_2 \cup \{x\}) - f(\calM_2)$, showing $f(.)$ is not supermodular. 
\end{proof}

We remark that the lack of supermodularity of $f(.)$ is not exclusive to binary matrices; supermodularity does not hold for real-valued matrices $\calM$ as well. As a consequence, by the duality between the technical problems of cold-start users and cold-start items, the lemma above disproves the claim in \cite{anava2015budget} about the supermodularity of their objective function. Similarly, one can show that the objective function is also not submodular. These results together with Theorem~\ref{thm-oid-approx} dash hopes for finding approximation algorithms for the OID problem. 

\eat{
\begin{lemma}
The objective function of $f(B) = tr((V_BV_B^T)^{-1})$ the OID problem is {\em not sub-modular}.
\end{lemma}
\begin{proof}

Consider the same matrix $\calM_2$ and $x$ as above and the following matrix $\calM_1$. 

 \[
\calM_1=
  \begin{bmatrix}
    0 & 0 & 0 & 1 & 1 \\ 
    1 & 0 & 0 & 0 & 1 \\ 
    0 & 1 & 0 & 1 & 1 \\ 
    1 & 1 & 1 & 0 & 0 \\ 
    1 & 0 & 1 & 0 & 1 
  \end{bmatrix}
\] 
As before, regarded as sets of column vectors, $\calM_1 \subset \calM_2$. However, $f(\calM_1) = 20, f(\calM_1 \cup \{x\}) = 16.333, f(\calM_2) = 6.6250, f(\calM_2 \cup \{x\}) = 4.4783$. Hence $f(\calM_1 \cup \{x\}) - f(\calM_1) = 16.333 - 20 = -3.6667$ 
and 
$f(\calM_2 \cup \{x\}) - f(\calM_2) = 4.4783 - 6.6250 = -2.1467$. Clearly, submodularity fails to hold, since $f(\calM_1 \cup \{x\}) - f(\calM_1) \geq f(\calM_2 \cup \{x\}) - f(\calM_2)$ is violated. 
\end{proof}
}

\section{Algorithms}
\label{sec:approach}

In view of the hardness and hardness of approximation results (Theorems~\ref{thm-oid} and~\ref{thm-oid-approx}), and the fact that the objective function is neither submodular nor supermodular (see Section~\ref{sec:nosubsuper}), efficient approximation algorithms are unlikely to exist. We present scalable heuristic algorithms for selecting items with which to interview a cold user so as to learn her profile as accurately as possible. 

\subsection{Forward Greedy} 


Recall that for a set of items $S \subseteq \{v_1, ..., v_n\}$, we denote by $V_S$ the submatrix of the item latent factor matrix corresponding to the items in $S$, $C_S$ is a diagonal matrix with $\sigma_1, \sigma_2, ..., \sigma_{|S|}$ at positions corresponding to items in $S$ and $f(V_SC_S) := tr((V_S C_S^{-2} V_S^T)^{-1})$ is the profile learning error that we seek to minimize by selecting the best items. It can be shown, following a similar result in \cite{anava2015budget} (Proposition 3) that $f(.)$ is monotone decreasing, i.e., for item sets $S \subseteq T$, $f(V_TC_T) \leq f(V_SC_S)$, and so $-f(.)$ is monotone increasing.
 
\noindent 

{\bf Overview}. We start with $B$ initialized to the empty set of items, and in the first iteration, add the item that leads to the smallest expected error value, i.e., smallest value of $f(.)$.
Then, in each successive iteration, we add to $B$ an item that has the maximum marginal gain w.r.t. $-f(.)$. That is, we successively add 
\begin{align*} 
v^* = \mbox{arg max}_{v \in \calI\setminus B} [-f(V_{B\cup\{v\}} C_{B\cup\{v\}}) -(-f(V_B C_B))]  \\ 
= \mbox{arg max}_{v \in \calI\setminus B} [f(V_B C_B) - f(V_{B\cup\{v\}} C_{B\cup\{v\}})]
\end{align*} 
to $B$ until the budget $b$ is reached. We use $-\ftilde(v_j|V_B)$ to denote $[f(V_B C_B) - f(V_{B\cup\{v\}} C_{B\cup\{v\}})]$.


\noindent 
{\bf Acceleration}. 
In this section, we propose an accelerated version of Forward Greedy (FG), by borrowing ideas from the classic lazy evaluation approach, originally proposed in~\cite{minoux1978accelerated} to speed up the greedy algorithm for submodular function maximization.

\eat{
two optimizations: (i) we speed up the matrix inversion step, necessary for evaluating the error function $f(.)$, by using the Sherman-Morrison formula with a rank-one update; (ii) we borrow ideas from the classic lazy evaluation approach, originally proposed in~\cite{minoux1978accelerated} to save on function evaluations. We briefly describe these optimizations next. 

\noindent 
{\bf Sherman-Morrison}. Suppose the current set of items is $S$. Let $v \in S$ be an item that is removed from $S$ to give $S' = S\setminus\{v\}$.  Suppose $(V_SV_S^T)^{-1}$ is already computed. Then $(V_{S'}V_{S'}^T)^{-1}$ can be computed as 
\begin{align}
(V_{S'} V_{S'}^T)^{-1} & = (V_S V_S^T - vv^T)^{-1} \nonumber \\
& = (V_S V_S^T)^{-1} + \frac{(V_S V_S^T)^{-1}vv^T(V_S V_S^T)^{-1}}{1-v^T(V_S V_S^T)^{-1}v}. \nonumber
\end{align}

Suppose $g(S)$ is a supermodular set function that is monotone decreasing and we want to find a set $S$ of size $\leq b$ that minimizes $g(S)$, using the backward greedy framework. Let $S_i$ be the set of items in iteration $i$ of a backward greedy algorithm. Then for $j > i$, clearly $S_j \subset S_i$. Define $\gtilde(u|S) := g(S\setminus\{u\}) - g(S)$, i.e., increase in error from dropping item $u$ from set $S$. If $g$ is supermodular, it follows that $\gtilde(u|S_i) \le \gtilde(u|S_j)$, whenever $u\in S_j$, $i < j$.  This follows upon noting that the function $\gtilde(u|S)$, denoting increase in error on dropping $u$ from $S$ is actually the negative of the marginal gain of adding $u$ to $S\setminus\{u\}$, i.e., $\gtilde(u|S) = - g(u|S)$. 
Suppose there are items $u\in S_i$ and $v\in S_j$, with $i < j$, such that $\gtilde(u|S_i) \ge \gtilde(v|S_j)$. Then there is no need to evaluate $\gtilde(u|S_j)$, since $\gtilde(u|S_j) \ge \gtilde(u|S_i) \ge \gtilde(v|S_j)$. This saving on evaluations can be implemented by keeping track of when each error increment (or marginal gain) was last updated and making use of a priority queue for picking the element with the least error increment.  }

Recall that our error function $f(.)$ is actually \emph{not} supermodular, and hence $-f(.)$ is also not submodular. Our main goal in applying lazy evaluation to it is not only to accelerate item selection, but also explore the impact of lazy evaluation on the error performance. It allows us to save on evaluations of error increments that are deemed redundant, assuming (pretending, to be more precise) that $f(.)$ is supermodular. We will evaluate both the prediction and profile error performance as well as the running time performance of these optimizations in Section~\ref{sec:expt}. 

A further speed-up can be obtained by using the Sherman-Morrison optimization which saves on repeated invocations of matrix inverse, and instead computes incrementally and hence efficiently, using rank one updates \cite{sherman1949adjustment}.




Apart from the algorithm described above, for the general Forward Greedy (FG2), we also study a more basic version (FG1) as a baseline, assuming that the noise terms are identical, i.e., $C =\sigma I$, where $I$ is the identity matrix. This saves some work compared to FG2. We refer to their accelerated versions as AFG1 and AFG2. 

\subsection{Backward Greedy}

We compare the FG family of algorithms against the backward greedy algorithms BG1 and BG2 proposed in \cite{anava2015budget}. It was claimed there that the backward greedy algorithms are approximation algorithms. This is incorrect since their claim relies on the error function being supermodular. Unfortunately, their proof of supermodularity is incorrect as shown by our counterexample in Section~\ref{sec:nosubsuper}. Thus, backward greedy is a heuristic for their problem as well as our OID problem.

\noindent 
{\bf Overview}. Backward greedy (BG) algorithms essentially remove the worst items from the set of all items and use the remaining ones as interview items. We start with the set of all items and successively remove an item with the smallest increase in the error, until no more than $b$ items are left, where $b$ is the budget. 

As with accelerated forward greedy, we use lazy evaluation to optimize backward greedy. We refer to the resulting algorithm as Accelerated Backward Greedy, and study the basic version (ABG1) with identical variances and the general version (ABG2).

One key shortcoming of the BG family of algorithms (BG1 and BG2 and their accelerated versions) is that they need to sift through all items in the database and eliminate them one by one till the budget $b$ is reached. In a real recommender system, the number $n$ of items may be in the millions and $b$ is typically $<< n$, so this approach may not be feasible to deploy in real world systems. 

In the next section, we conduct an empirical evaluation of the forward and backward greedy algorithms as well as their accelerated versions proposed here and compare them against baselines. 

\eat{
\begin{algorithm}
  \caption{Accelerated Forward Greedy  (AFG)
    \label{alg:ags} 
\label{alg:afg}}
  \begin{algorithmic}[1]
    \Require{item set $\calI$ and corresponding matrix $V$; budget $b$.}
    \Ensure{items subset $B, |B| = b$}
    \Statex
      \Let{$V_B$}{$\phi$}
      \Let{$B$}{$\phi$}
      \For{$j \gets 1 \textrm{ to } |\calI|$}
      \State insert $(v_j, -\ftilde(v_j|V_B))$ into priority queue $Q$ 
      \EndFor
      \Do
      \State pop $(j, \alpha_j)$ from $Q$ 
      \If{$\alpha_j$ not "fresh"}
      \State recompute $\alpha_j = - \ftilde(v_j | V_B)$
      \EndIf
      \If{$\alpha_j >$ MAX($Q$)}
	\Let{$V_B$}{$V_{B} \cup \{v_j\}$}
	\Let{$B$}{$\calI \cup \{v_j\}$}
      \Else
      	\State insert $(v_j, \alpha_j)$ into $Q$ 
      \EndIf
      \doWhile{$|V_B| < b$}
  \end{algorithmic}
\end{algorithm}
}

\section{Experimental Evaluation}
\label{sec:expt}
In this section, we describe the experimental evaluation, and compare with prior art. We evaluate our solutions both qualitatively and scalability-wise: for quality evaluation , we measure {\em prediction error} and {\em user profile estimation error} (see Section \ref{sec:setup}), whereas, for scalability, we measure the running time.

The development and experimentation environment uses a Linux Server with 2.93 GHz Intel Xeon X5570 machine with 98 GB of memory with OpenSUSE Leap OS. 



\subsection{Dataset and Model Parameters}
We use Netflix and Movielens (ML) datasets. For each dataset, we train a probabilistic matrix factorization model~\cite{pmf} on only the ratings given by the warm users. We use gradient descent algorithm~\cite{zhang2004solving} to train the model, with latent dimension = $20$, momentum = $0$, regularization = $0.1$ and linearly decreasing step size for faster convergence. This allows us to move quickly towards the minima initially, decreasing the step size as we get closer, to avoid overshooting. We report the dataset characteristics, and the RMSE obtained after training on the warm user ratings, in Table \ref{tab:datasets}. 

\begin {table}[H]
\caption {Dataset Sizes} \label{tab:datasets} 
\begin{center}
    \begin{tabular}{| l | c | c | c | c |}
    \hline
    \textbf{Dataset} & \textbf{\# Ratings} & \textbf{\# Users} & \textbf{\# Items} & \textbf{RMSE} \\ \hline
    ML 100K & 100,000 & 943 & 1682 & 0.9721 \\ \hline
    ML 1M & 1,000,209 & 6,040 & 3,900 & 0.8718 \\ \hline
    ML 20M & 20,000,263 & 138,493 & 27,278 & 0.7888\\ \hline
    Netflix & 100,480,507 & 480,189 & 17,770 & 0.8531\\ 
    \hline
    \end{tabular}
\end{center}
\end {table}

\eat{
\begin {table}[H]
\caption {Dataset Sizes} \label{tab:datasets} 
\begin{center}
    \begin{tabular}{| l | c | c | c | c |}
    \hline
    \textbf{Dataset} & \textbf{\# Ratings} & \textbf{\# Users} & \textbf{\# Items} & \textbf{Sparsity} \\ \hline
    ML 100K & 100,000 & 943 & 1682 & 6.3\% \\ \hline
    ML 1M & 1,000,209 & 6,040 & 3,900 & 4.25\% \\ \hline
    ML 20M & 20,000,263 & 138,493 & 27,278 & 0.53\%\\ \hline
    Netflix & 100,480,507 & 480,189 & 17,770 & 1.18\%\\ 
    \hline
    \end{tabular}
\end{center}
\end {table}

\begin {table}[H]
\caption {Experiment Sizes} \label{tab:expt} 
\begin{center}
    \begin{tabular}{| l | c | c | c | c |}
    \hline
    Dataset & \# Warm Users $|\calU|$ & $|\calI|$ & RMSE \\ \hline
    
    ML 100K & 702 & 1647 & 0.9721 \\ \hline
    ML 1M & 4,473 & 3,666 & 0.8718 \\ \hline
    ML 20M & 102,628 & 25,529 & 0.7888 \\ \hline
    Netflix & 355,757 & 17,770 & 0.8531 \\ 
    \hline
    \end{tabular}
\end{center}
\end {table}
}

\subsection{Experimental Setup}

\label{sec:setup}
We simulate the cold user interview process as follows:
\begin{enumerate}
\item Set up the system
\begin{enumerate}
\item Randomly select 70\% of the users in a given dataset to train the model ($\calU$)
\item $R := $ Matrix of ratings given by $\calU$ only
\item Train a PMF model on $R$, to obtain $U, V$ \label{step:pmf}
\end{enumerate}
\item Construct item covariance matrix $C$ given by, \\ $\sigma_j := \sqrt{\frac{1}{|{R}_{*j}|} \sum_{i \in {R}_{*j}} (R_{ij} - \hat{R}_{ij})^2}$, where $R_{*j}$ refers to the column(set) of ratings received by item $j$ \label{step:cov}
\item For each cold user $u_{\ell} \not \in \calU$,
\begin{enumerate}
\item Construct $U_{\ell}$ using gradient descent method \cite{rendle2008online}, and using the item latent factor matrix $V$ \label{step:true-user-vector}
\item Randomly split items $u_{\ell}$ has rated, into candidate pool $CP$ and test set $Test$
\item Run item selection algorithm on $CP$ with corresponding $V$ and budget = $b$
\item $B :=$ items returned by algorithm to interview $u_{\ell}$
\item Reveal $R_{\ell}:= u_{\ell}$'s ratings on $B$
\item Construct $\uHat := (\gamma I + V_B C_B^{-2} V_B^T)^{-1} V_B C_B^{-2} R_{\ell}$
\item RMSE on $Test := \sqrt{\frac{1}{|Test|} \sum_{j \in Test} (R_{\ell j} - V_j^T \uHat)^2}$
\item Calculate profile error := $||\uHat - U_{\ell}||^2_F$
\end{enumerate}
\end{enumerate}

In Step \ref{step:cov}, we estimate the noise terms using the method outlined in \cite{anava2015budget}. For the case where the covariance matrix $C = \sigma I$, we estimate $\sigma$ that best fits a validation set (a randomly chosen subset of the cold user ratings).

In Step \ref{step:true-user-vector} we compute \emph{true} latent vectors of the cold users. We cannot compute that using the PMF model in Step \ref{step:pmf}, as these ratings are hidden at that stage. Moreover, since each cold user is independent, we cannot train a model on their combined pool of ratings. Instead, we adopt the gradient descent based method given in \cite{rendle2008online}, to generate the latent vector for a new user keeping everything else constant. This method produces results comparable to retraining the entire model globally (with 1\% error in the worst case), in a few milliseconds \cite{rendle2008online}. The user profiles thus generated are used as \emph{true profiles}.

\eat{we train the PMF model on ratings given by 70\% of the users only. The rest of the users are treated as cold users, whose ratings are kept \emph{hidden} from the item selection algorithms. 
For each user, we randomly divide the pool of items for which we have their ratings, into a candidate pool and a test set. We run the algorithms to select items from the candidate pool, while keeping the corresponding ratings hidden. Then we \emph{reveal} their ratings on the items selected by the algorithms, as if obtained through the interview process. Based on the revealed ratings, we build the user profile, and use that to predict the cold users' ratings for the items in the test set. We use RMSE to evaluate the algorithms' performance.}

The estimated rating is computed by taking an inner product between the item and the cold user vector $u_{\ell}$. We call this the \emph{ideal} setting, as it corresponds to an ideal, zero error MF model. It has two advantages: first, it allows us to decouple our problem from the problem of tuning a matrix factorization model. This way, the model error from using a possibly less than perfect PMF model does not percolate to our problem. Second, we do not have to be limited to only selecting the items for which we have ratings in our database, as we can generate ratings for all items. 
This allows us to run scalability tests more comprehensively, as we are not limited by the availability of ratings. 

\subsection{Algorithms Compared}
We compare the following algorithms against their accelerated versions, as described in Section~\ref{sec:approach}: Backward Greedy Selection 1 ({\tt BG}), Backward Greedy Selection 2 ({\tt BG2}), Forward Greedy Selection 1 ({\tt FG1}) and Forward Greedy Selection 2 ({\tt FG2}). Further, we use the following heuristics as baselines: Popular Items ({\tt PI}), where $b$ items with the most ratings are selected, Random Selection ({\tt RS}), where the items are randomly sampled from the candidate pool, High Variance ({\tt HV}), where $b$ items with the highest variance in rating prediction among warm users are selected, Entropy ({\tt Ent}) and Entropy0 ({\tt Ent0}) \cite{rashid2008learning}, where the $b$ most contentious, and therefore most informative items are selected. Entropy0 is a modification of entropy that includes a notion of how many ratings have been received by each item, so as to discourage the selection of obscure items.

\subsection{Quality Experiments}

{\bf Results:} We run quality experiments to measure prediction error and profile error for all four datasets. For the datasets ML 100K and ML 1M, we compare all $13$ algorithms under the \emph{ideal} setting, where we note that {\tt BG} performs almost exactly the same as {\tt FG} (see Fig.~\ref{fig:quality}), while {\tt BG2} and {\tt FG2} perform better for both profile and prediction error.
For the larger datasets Netflix and ML 20M, we compare algorithms under the \emph{real} setting. For both, we observe that {\tt FG2} outperforms {\tt BG2} for both prediction and profile error, and {\tt FG} outperforms {\tt BG} for smaller values of $b$. 


\begin{figure*}[ht]
\centering
\subfigure[Prediction Error -- ML 100K]{
   \includegraphics[height=3cm, width=4cm]{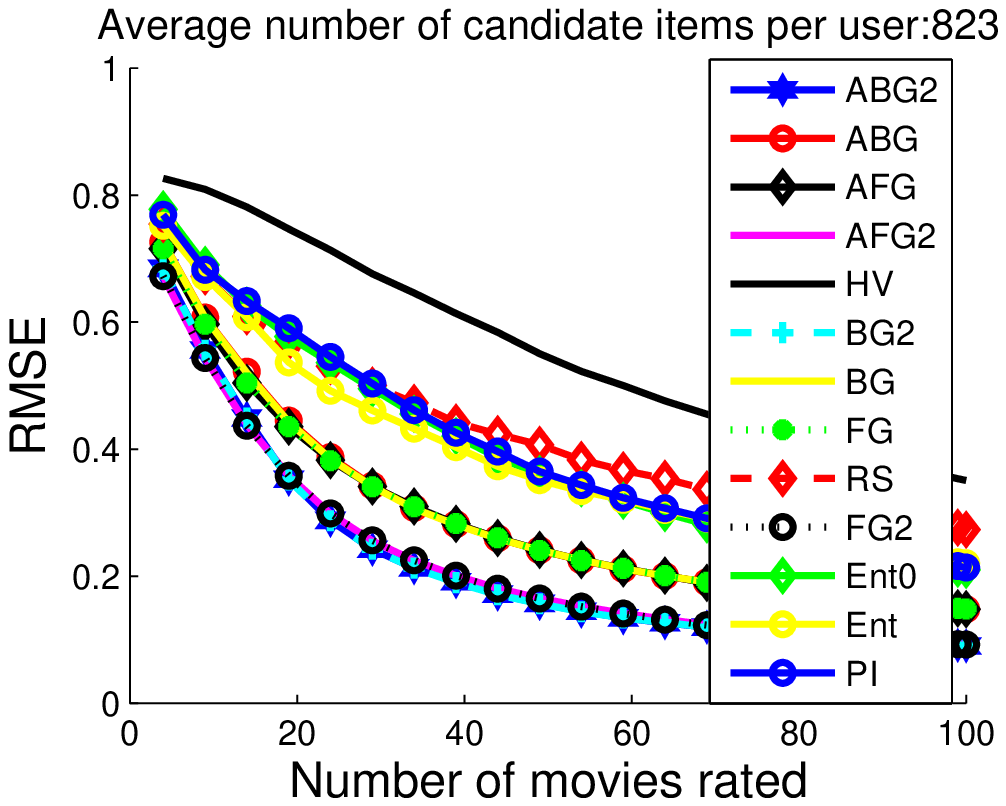}
     \label{fig:ml100k_algos8}
    }
\subfigure[Profile Error -- ML 100K]{
   \includegraphics[height=3cm, width=4cm]
    {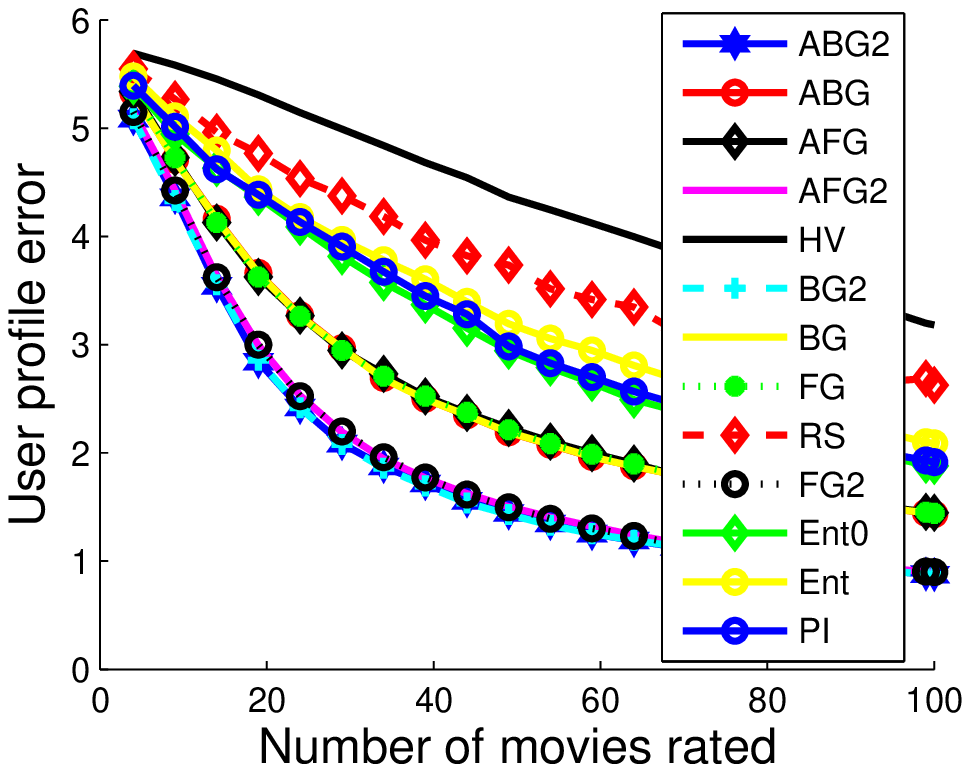}
     \label{fig:ml100k_pu}
    }
\subfigure[Prediction Error -- ML 1M]{
   \includegraphics[height=3cm, width=4cm]{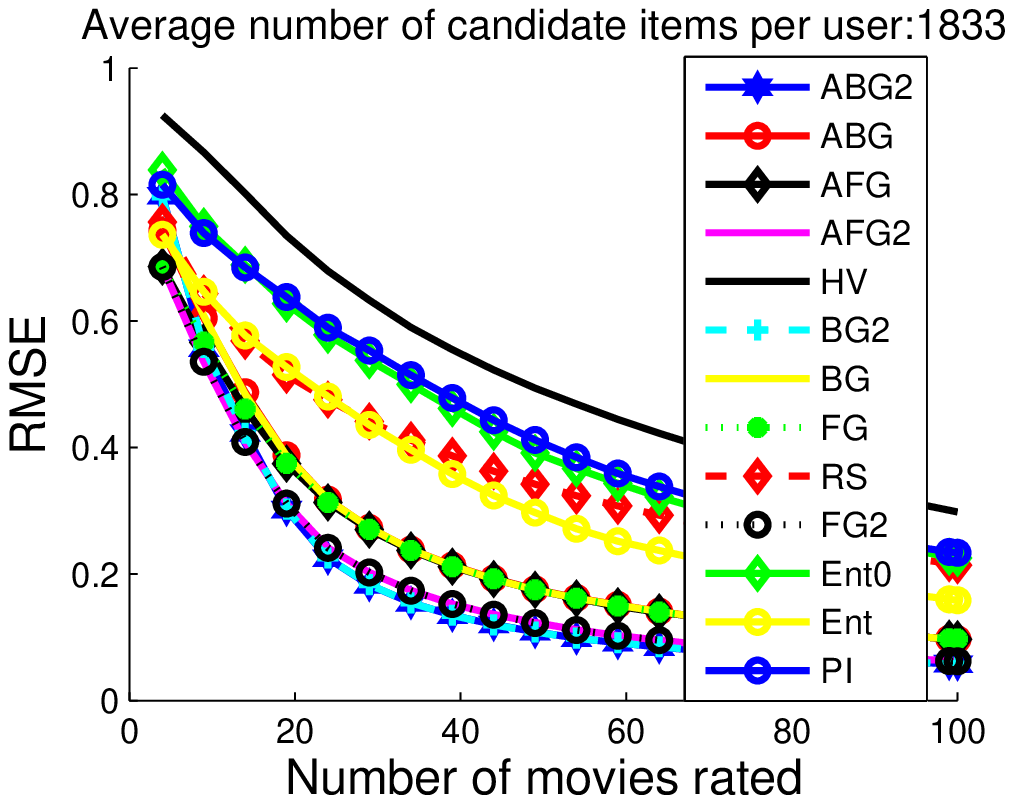}
     \label{fig:ml1m_algos8}
    }
\subfigure[Profile Error -- ML 1M]{
   \includegraphics[height=3cm, width=4cm]{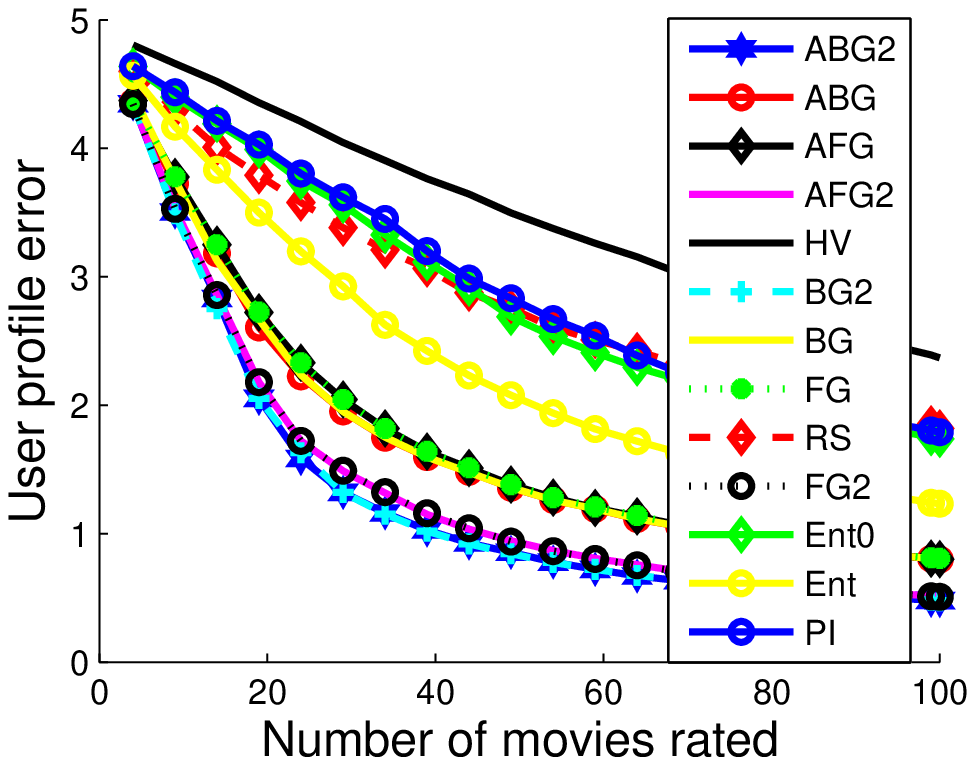}
    \label{fig:ml1m_pu}
    }
    \subfigure[Prediction Error -- Netflix]{
   \includegraphics[height=3cm, width=4cm]{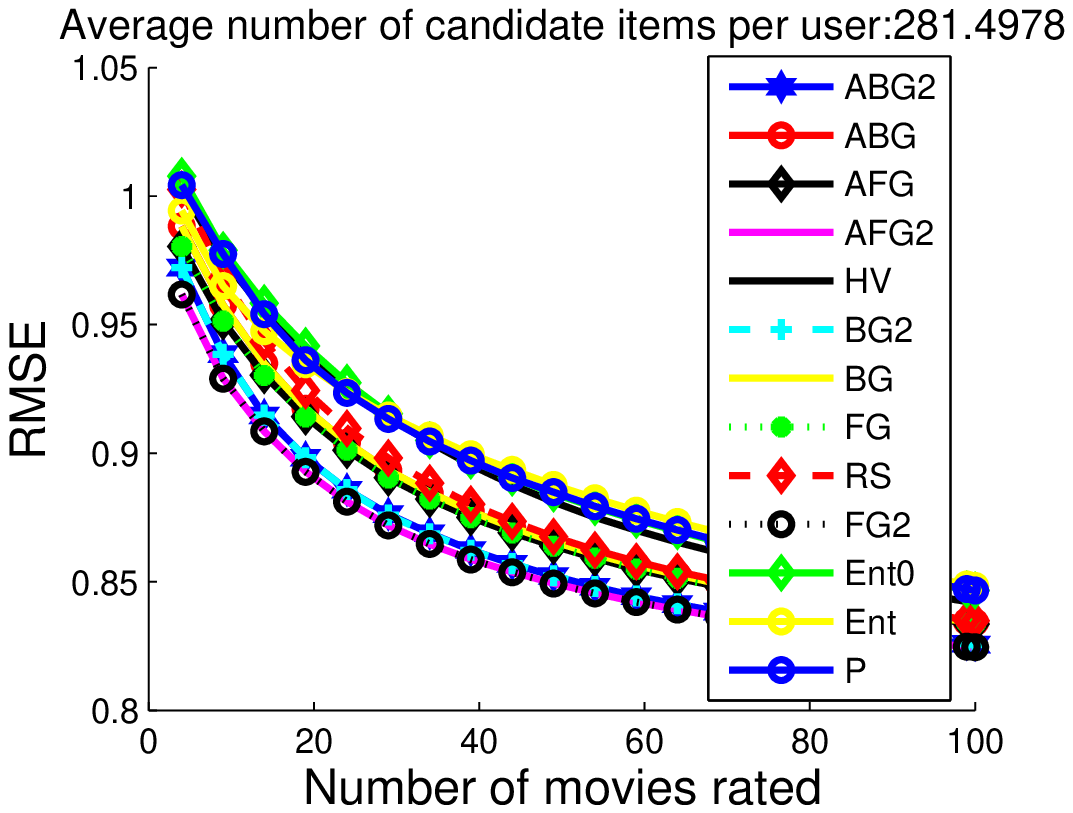}
     \label{fig:nf_algos8}
    }
\subfigure[Profile Error -- Netflix]{
   \includegraphics[height=3cm, width=4cm]
{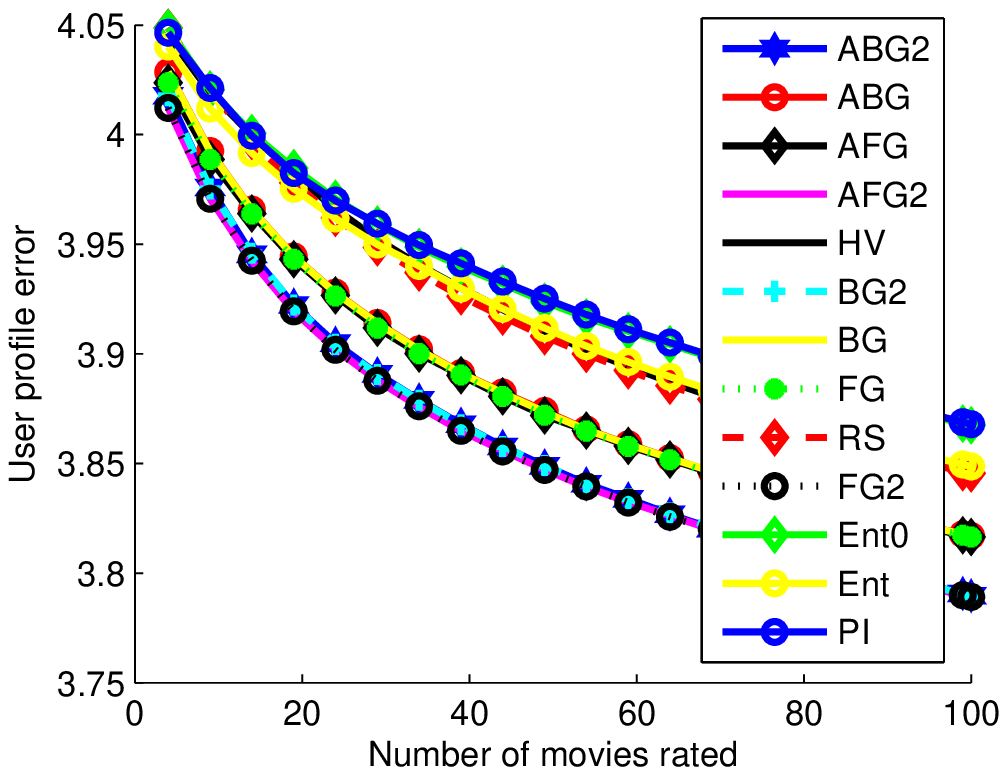}
     \label{fig:nf_pu}
    }
\subfigure[Prediction Error -- ML 20M]{
   \includegraphics[height=3cm, width=4cm]{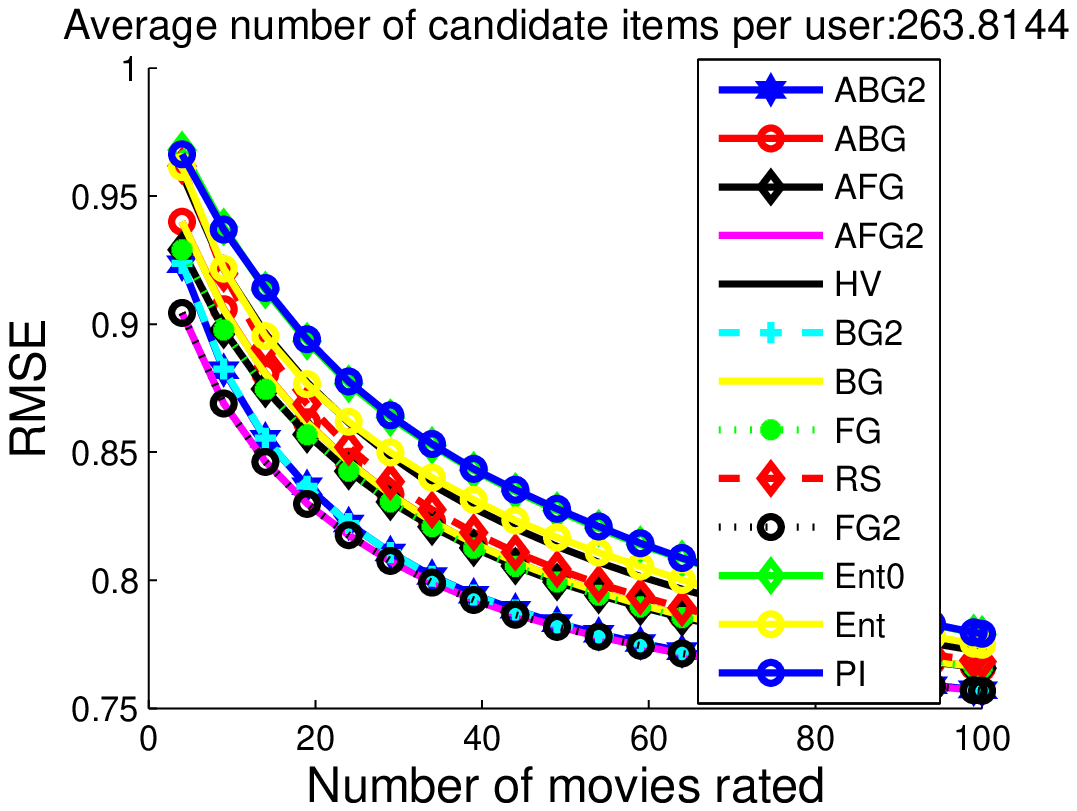}
  \label{fig:ml20m_algos8}
    }
\subfigure[Profile Error -- ML 20M]{
   \includegraphics[height=3cm, width=4cm]{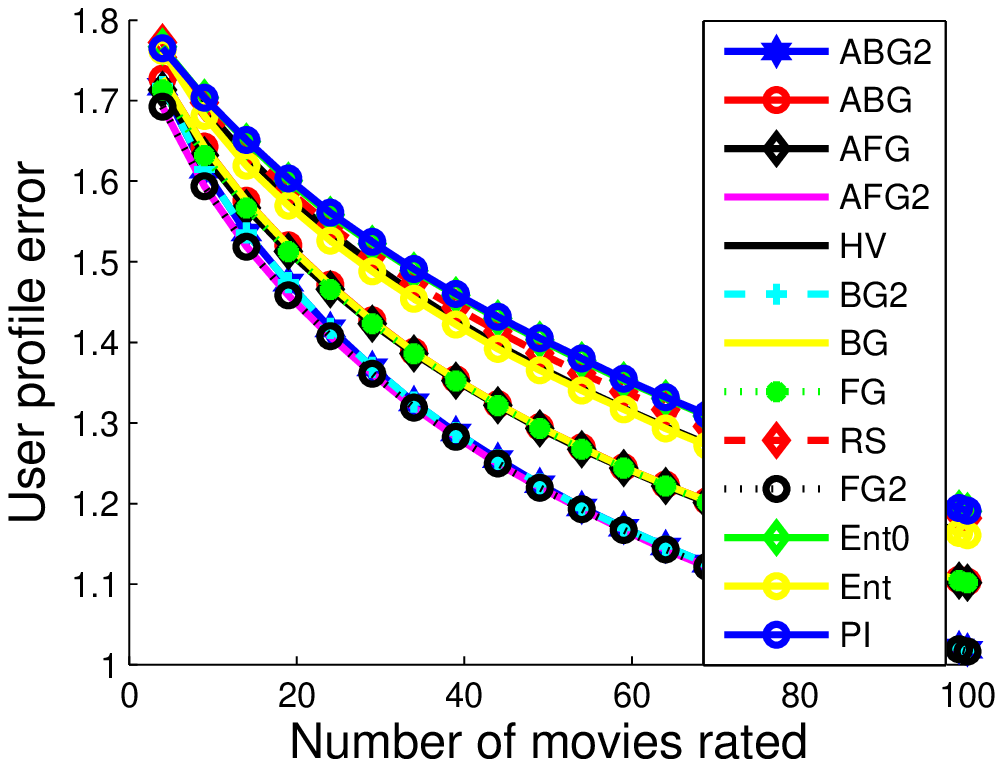}
  \label{fig:ml20m_pu}
    }
    \vspace{-0.1in}
\caption{\small Profile and Prediction Errors \label{fig:quality}}
\end{figure*}

Despite the lack of supermodularity or submodularity, the accelerated variants of all the algorithms always perform akin to their non-accelerated variants on prediction \emph{and} profile error (Fig. \ref{fig:quality}).
This seems to indicate that the objective function may be close to supermodular in practice.

\eat{
\begin{figure}
\centering
\subfigure[Prediction Error]{
   \includegraphics[height=3cm, width=3.8cm]{Figures/Final/EPS/netflix_5000users_3iter_20dim_100items.eps}
     \label{fig:nf_algos8}
    }
\subfigure[Profile Error]{
   \includegraphics[height=3cm, width=3.8cm]
   {Figures/Final/EPS/netflix_PuError_5000users_3iter_20dim_100items.eps}
     \label{fig:nf_pu}
    } 
\caption{\small Results on the Netflix Dataset \label{fig:netflix}}
\end{figure}

\begin{figure}
\centering
\subfigure[Prediction Error]{
   \includegraphics[height=3cm, width=3.8cm]{Figures/Final/EPS/ml20m_5000users_3iter_20dim_100items.eps}
  \label{fig:ml20m_algos8}
    }
\subfigure[Profile Error]{
   \includegraphics[height=3cm, width=3.8cm]
  {Figures/Final/EPS/ml20m_PuError_5000users_3iter_20dim_100items.eps}
  \label{fig:ml20m_pu}
    } 
\caption{\small Results on the Movielens 20M Dataset \label{fig:ml20m}}
\end{figure}
}

\subsection{Scalability Experiments}

To test scalability of our proposed solutions we run all $13$ algorithms on $2$ of the datasets, ML 100K and ML 1M under \emph{ideal} setting and on Netflix and ML 20M under \emph{real} setting, and measure running times with varying budget. Note that due to the datasets' sparsity, the average number of items per cold user that the algorithms sift through in the \emph{real} setting ranges from 263.8 to 281.4, while in the \emph{ideal} setting, it is significantly more (823 and 1833 for ML 100K and ML 1M respectively).

{\bf Results:} In all cases, the accelerated algorithms produce error similar to their un-accelerated counterparts (Fig. \ref{fig:quality}), but running time performance is far superior (Fig. \ref{fig:rt}). Among all algorithms, {\tt FG2} (both accelerated and unaccelerated) has the best qualitative performance,  with prediction and profile error comparable to {\tt BG2} (Fig. \ref{fig:quality}) or better, and is significantly faster than {\tt BG2} in terms of running time. In fact, even for ML 100K, our smallest dataset, under the \emph{ideal} setting, the time taken by unaccelerated {\tt FG2} for $b = 100$ is less than a sixth of the time taken by {\tt ABG2} for $b = 4$.
Moreover, running times of all backward greedy algorithms increase significantly as we decrease $b$ (see Fig. \ref{fig:rt}), which makes them unsuitable for use in a real world system, where $b$ would typically be very small.

\begin{figure*}[ht]
\centering
\subfigure[Running Time -- ML 100K]{
   \includegraphics[height=3cm, width=4cm] {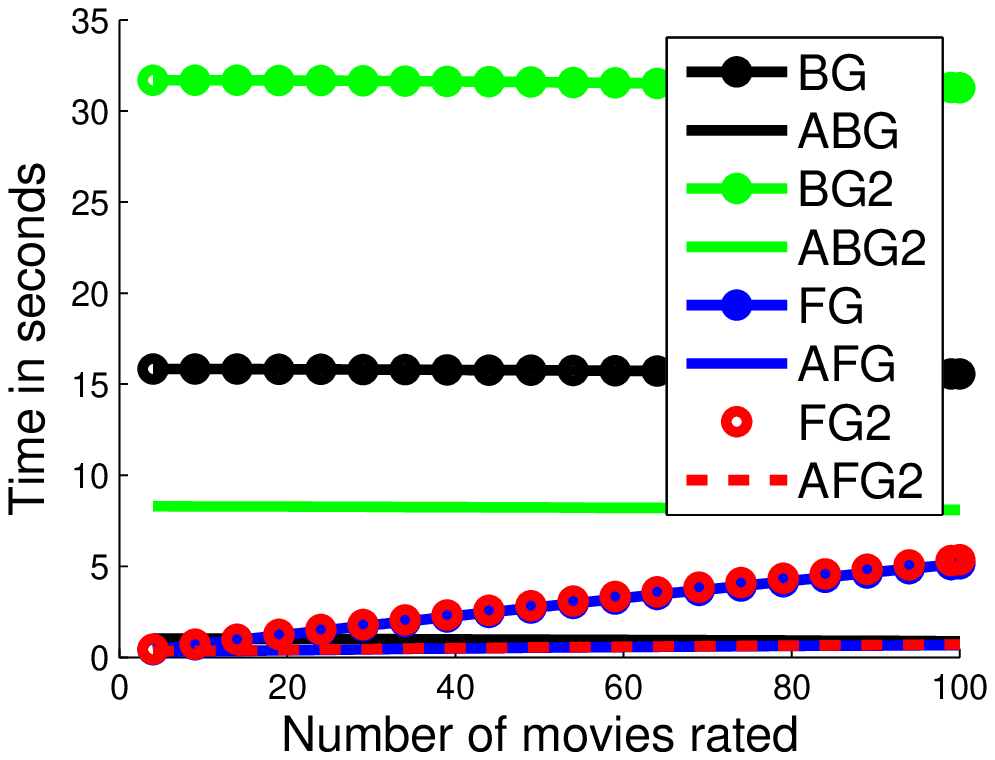}
    \label{fig:ml100k_comp}
 }
 \subfigure[Running Time -- ML 1M]{
   \includegraphics[height=3cm, width=4cm] {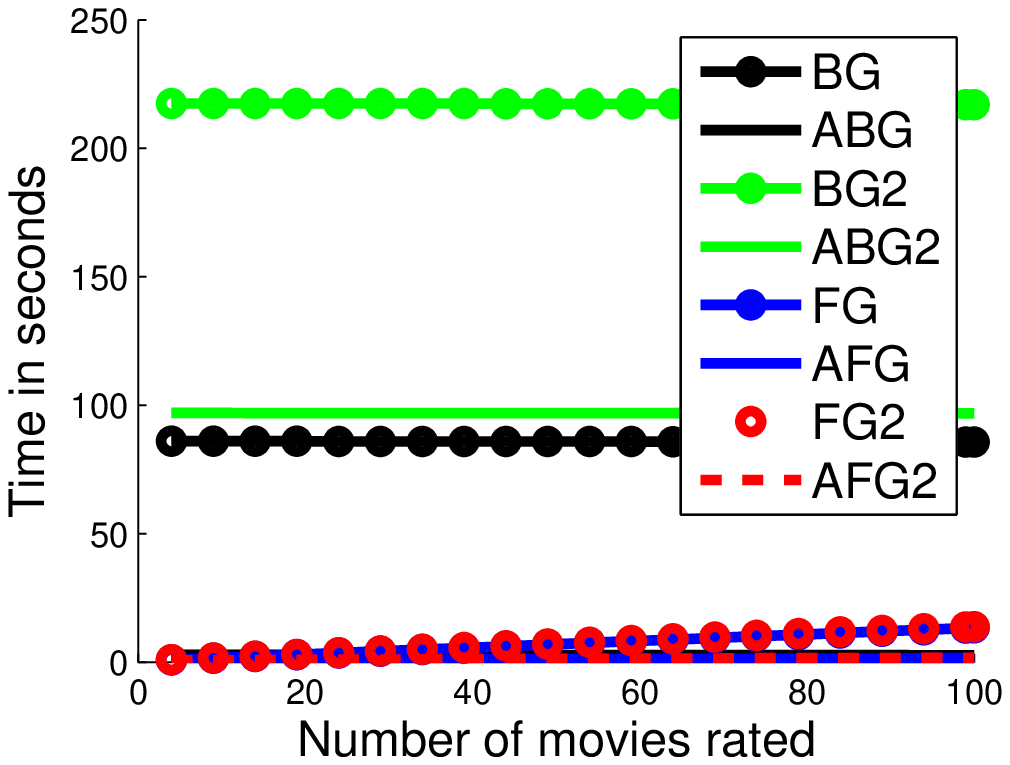}
    \label{fig:ml1m_comp}
 }
\subfigure[Running Time -- Netflix]{
   \includegraphics[height=3cm, width=4cm] {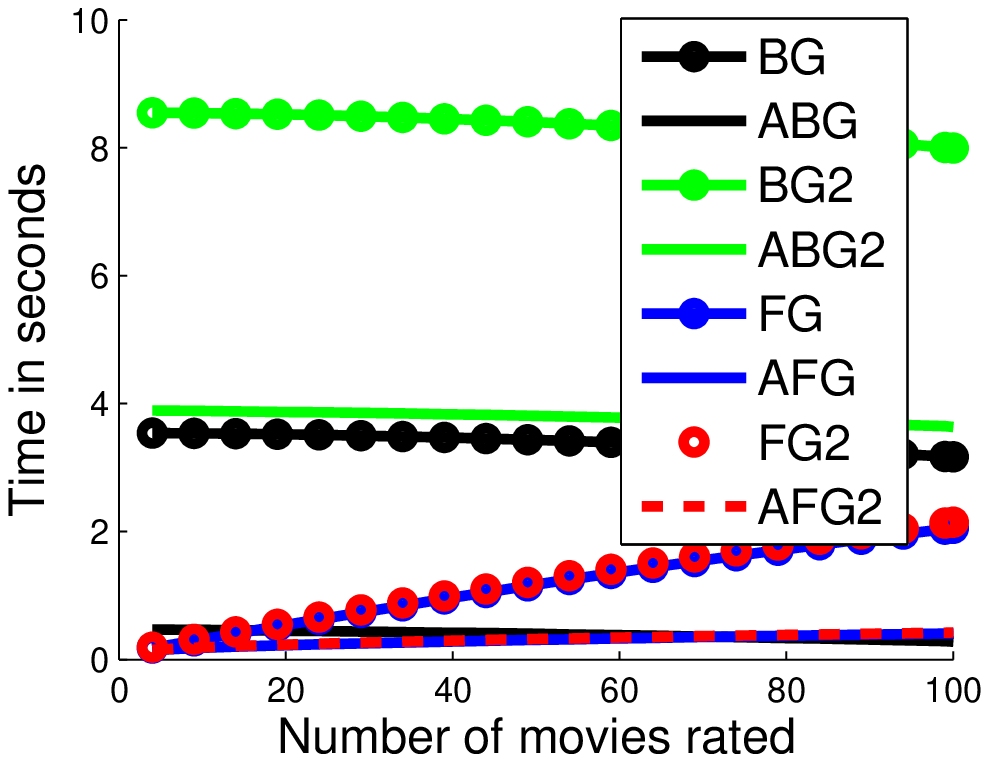}
    \label{fig:nf_comp}
 }
\subfigure[Running Time -- ML 20M]{
   \includegraphics[height=3cm, width=4cm] {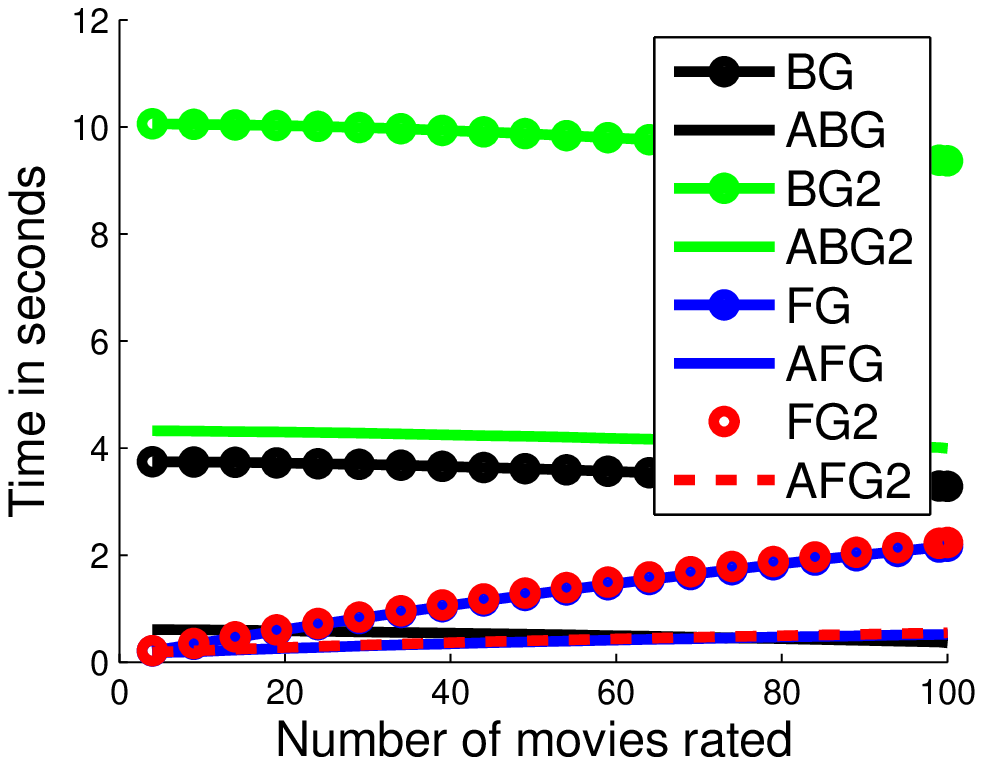}
    \label{fig:ml20m_comp}
 }
\caption{\small Running time versus budget size $b$ \label{fig:rt}}
\end{figure*}

\section{Conclusion}
\label{sec:conclusion}
In this paper, we consider model-based CF systems and investigate the optimal interview design problem for a cold-start user, that consists of a small number of items with which to interview and learn the user's interest. We formalize the problem as a discrete optimization problem to minimize the least square error between the true and estimated profile of the user, and present several non-trivial technical results. We present multiple non-trivial theoretical results including, NP-hardness, hardness of approximation, as well as proving that the objective function is neither submodular nor supermodular, suggesting efficient approximations are unlikely to exist. 
To our best knowledge, a rigorous theoretical analysis of this problem has not been conducted before. We present several scalable heuristic algorithms and experimentally evaluate their quality and scalability performance on four large scale real datasets. Our experimental results demonstrate the effectiveness of our proposed (accelerated) algorithms and show that they significantly outperform previous algorithms while achieving a comparable profile error and prediction error performance. This is the first time a large scale experimental study involving large real datasets has been reported and it shows that unlike our proposed accelerated versions, previously proposed algorithms do not scale. As ongoing work, we focus on how to design a single interview plan for a batch of cold users.


\bibliographystyle{ACM-Reference-Format}
\bibliography{short-names} 



\end{document}